\documentclass[letterpaper]{article}

\usepackage[english]{babel}
\usepackage[utf8]{inputenc}
\usepackage[T1]{fontenc}

\usepackage{graphicx} 
\usepackage[margin=2.15cm]{geometry}
\usepackage{algorithm}
\usepackage[noend]{algpseudocode}
\usepackage{amsthm}
\usepackage{amssymb,amsmath}  
\usepackage{thmtools}  
\usepackage{mathtools}
\usepackage[colorlinks=true, allcolors=blue]{hyperref}
\usepackage[capitalize]{cleveref}
\usepackage{todonotes}
\usepackage{placeins}
\usepackage{subcaption}
\usepackage{booktabs}
\usepackage{microtype}

\usepackage{newtxtext}
\usepackage{newtxmath}

\usepackage{enumitem}
\setlist[1]{labelindent=\parindent}
\setlist[enumerate]{label=(\arabic*)}
\setlist[itemize]{noitemsep}

\usepackage{titling}
\usepackage{environ}
\NewEnviron{fix}{\textcolor{Fuchsia}{\BODY}\normalcolor}

\usepackage[dvipsnames]{xcolor}

\newcommand{\cO}{\mathcal{O}}

\newcommand{\setsize}[1]{\left|#1\right|}
\DeclareMathOperator{\degreeOp}{\delta}
\DeclareMathOperator{\indegreeOp}{\degreeOp^{-}}
\DeclareMathOperator{\outdegreeOp}{\degreeOp^{+}}
\newcommand{\indegree}[1]{\indegreeOp\hspace{-0.2em}\left(#1\right)}
\newcommand{\outdegree}[1]{\outdegreeOp\hspace{-0.2em}\left(#1\right)}

\newcommand{\N}{\mathbb{N}}

\newcommand{\totalETs}{z_T}
\newcommand{\totalArbs}{z_A}
\newcommand{\remainingGraph}[1]{G'_{#1}}
\newcommand{\prefix}[1]{P_{#1}}

\newcommand{\scc}[1]{\mathrm{SCC}(#1)}
\newcommand{\ie}[1]{(i.e.,~#1)}
\newcommand{\eg}[1]{(e.g.,~#1)}

\definecolor{CWIBlue}{rgb}{0.38, 0.478, 0.56}
\definecolor{CWIDGreen}{rgb}{0.372, 0.494, 0.419}
\definecolor{CWILGreen}{rgb}{0.603, 0.717, 0.486}
\definecolor{CWIRed}{rgb}{0.686, 0.188, 0.262}

\newif\ifhidetodos
\hidetodostrue

\usepackage{marginnote}

\newcommand{\cbox}[2][yellow]{%
  \fcolorbox{#1}{white}{\parbox{\dimexpr\linewidth-2\fboxsep}{\strut #2\strut}}%
}

\newcommand{\customtodo}[3]{\textcolor{#2}{\(\blacktriangledown\)}\marginnote{\raggedright \textcolor{#2}{\textbf{#1:} #3}}}
\newcommand{\customtododisplay}[3]{\noindent\cbox[#2]{\textcolor{#2}{\textbf{#1:} #3}}}
\newcommand{\customtodoinline}[3]{\textcolor{#2}{\textcolor{#2}{\textbf{#1:} #3}}}

\ifhidetodos
    \renewcommand{\customtodo}[3]{}
    \renewcommand{\customtododisplay}[3]{}
    \renewcommand{\customtodoinline}[3]{}
\fi

\usepackage{authblk}

\title{Optimal Enumeration of Eulerian Trails \\ in Directed Graphs}

\author[1,2]{Ben Bals}
\author[1,2,3]{Solon P. Pissis}
\author[2]{Matei Tinca}
\affil[1]{CWI, Amsterdam, The Netherlands}
\affil[2]{Vrije Universiteit, Amsterdam, The Netherlands}
\affil[3]{The Cyprus Institute, Nicosia, Cyprus}
\date{\today}

\declaretheorem[numberwithin=section]{theorem}

\declaretheorem[sibling=theorem]{corollary}
\declaretheorem[sibling=theorem]{lemma}

\declaretheorem[sibling=theorem]{fact}

\declaretheorem[sibling=theorem]{observation}

\declaretheorem[name=Compression Rule]{compression}
\Crefname{compression}{Compression Rule}{Compression Rules}
\declaretheorem[name=Multigraph Compression Rule]{mcompression}
\Crefname{mcompression}{Multigraph Compression Rule}{Multigraph Compression Rules}

\begin{document}

\maketitle

\begin{abstract}
The BEST theorem, due to de Bruijn, van Aardenne-Ehrenfest, Smith, and Tutte,
is a classical tool from graph theory
that links the Eulerian trails in a directed graph $G=(V,E)$ 
with the arborescences in $G$.
In particular, one can use the BEST theorem to count the Eulerian trails in $G$
in polynomial time.
For enumerating the Eulerian trails in $G$, one could naturally resort to first enumerating
the arborescences in $G$ and then exploiting the insight of the BEST theorem to enumerate the Eulerian trails in $G$: every arborescence in $G$ corresponds to at least one Eulerian trail in $G$.
For over two decades, the fastest algorithm for enumerating arborescences in $G$ took $\cO(m\log n + n + \totalArbs \log^2 n)$ time, where $n=|V|$, $m=|E|$, and $\totalArbs$ is the number of arborescences in $G$ [Uno, ISAAC 1998].
Since Uno's algorithm does not lead to an optimal enumeration of Eulerian trails in directed graphs, we were motivated to develop a direct algorithm for this problem.\footnote{Independently of and concurrently with our work, the running time to enumerate arborescences was improved to the optimal $\cO(m + \totalArbs)$ time~\cite{gawrychowski2026enumeratingdirectedspanningtrees} using a rather intricate technique. We believe a \emph{direct} and \emph{simple} algorithm to enumerate Eulerian trails in directed graphs remains of interest.} 

Our central contribution is a remarkably simple algorithm to \emph{directly} enumerate the $\totalETs$ Eulerian trails in $G$ in the \emph{optimal} $\cO(m + \totalETs)$ time. 
As a consequence, our result improves on an implementation of the BEST theorem for counting Eulerian trails in $G$ when $\totalETs=o(n^2)$, and also unconditionally improves the combinatorial $\cO(m\cdot \totalETs)$-time algorithm of Conte et al.~[TKDD 2026] for the same task. Moreover, we show that, with some care, our algorithm can be extended to enumerate Eulerian trails in directed multigraphs in optimal time, enabling applications in bioinformatics and data privacy.
\end{abstract}

\section{Introduction}\label{sec:intro}

Let $G=(V,E)$ be a graph on $n=|V|$ nodes and $m=|E|$ edges.
Unless explicitly stated otherwise, we will assume throughout that $G$ is simple and connected when $G$ is undirected, or simple and weakly connected when $G$ is directed.
A \emph{trail} in $G$ is a walk in $G$ with no repeated edge.
An \emph{Eulerian trail} is a trail that uses each graph edge exactly once. If $G$ has an Eulerian trail, we will call $G$ \emph{Eulerian}.
An \emph{Eulerian cycle}, or \emph{Eulerian circuit}, is an Eulerian trail that starts and ends at the same node in $G$; that is, it is a graph cycle that uses each edge exactly once.

Euler showed that an undirected graph has an Eulerian cycle if and only if it has no nodes of odd degree. Similarly, an undirected graph has an Eulerian trail if and only if exactly zero or two nodes have odd degree. A directed graph has an Eulerian cycle if and only if for every node $u\in V$ the in-degree $\indegree{u}$ is equal to the out-degree $\outdegree{u}$. Similarly, a directed graph has an Eulerian trail if and only if for at most one node $u\in V$, $\outdegree{u}-\indegree{u}=1$, and for at most one node $v$, $\indegree{v}-\outdegree{v}=1$, and every other node has equal in-degree and out-degree. 
Thus, \emph{deciding} whether an undirected or directed graph $G$ has an Eulerian trail can be done in $\cO(m)$ time. If $G$ is Eulerian, constructing a \emph{witness} Eulerian trail can be done in $\cO(m)$ time for both undirected and directed graphs using the algorithm by Hierholzer~\cite{hierholzer}. In fact, Hierholzer discovered this algorithm before his death in 1871, and it was published posthumously in 1873.

Eulerian trails are arguably among the most basic combinatorial objects.
Most undergraduate or graduate courses on graph algorithms provide the above introduction to Eulerian graphs.
Motivated by the crossing of the iconic Seven Bridges of Königsberg, this topic often appears in the first lecture.
As with many other combinatorial objects, the two most basic remaining questions are \emph{counting} and \emph{enumeration}. 

For undirected graphs, Brightwell and Winkler~\cite{DBLP:conf/alenex/BrightwellW05} showed that counting Eulerian trails is $\textsf{\#P}$-complete. Kurita and Wasa~\cite{DBLP:journals/tcs/KuritaW22} showed that enumerating all $\totalETs$ Eulerian trails in $G$ can be done in the optimal $\cO(m+\totalETs)$ time.

For directed graphs, counting Eulerian trails can be done in polynomial time using the BEST theorem~\cite{best}, named after de Bruijn, van Aardenne-Ehrenfest, Smith, and Tutte. The BEST theorem relies on the close relationship between Eulerian trails and arborescences.
A directed graph is called an \emph{arborescence} if, from a given node $s$ known as the root node, there is exactly one elementary path (meaning without node repetition) from $s$ to every other graph node.
Let us fix the source and target nodes of the Eulerian trails in graph $G$, and denote them by $s$ and $t$, respectively.
Further, let $r_u:= \outdegree{u}$ for $u\neq t$, and $r_t := \outdegree{t}+1$.
The BEST theorem can be stated as follows (see also \Cref{fig:arbo-trails}):
\begin{equation}\label{eq:BEST} 
  \totalETs = \totalArbs \cdot\left(\prod_{u\in V}(r_u-1)!\right),
\end{equation}
where $\totalETs$ is the number of Eulerian trails that start at node $s$,
and $\totalArbs$ is the number of arborescences rooted at node $s$. Thus, $\totalETs$ is obtained by multiplying $\totalArbs$ by the number of permutations of the edges outgoing from each node ($\prod_{u\in V}(r_u-1)!$).   
One can compute $\totalArbs$ as a determinant~\cite{DBLP:books/daglib/0009415} by Kirchhoff's theorem~\cite{Tutte1948}.
Since computing a determinant is equivalent to matrix multiplication~\cite{det}, counting Eulerian trails in $G$ requires $\cO(\textrm{poly}(m))$ time.
However, as matrix multiplication requires $\Omega(n^2)$ time and space, and since the lower-upper (LU) decomposition of the underlying Laplacian matrix is susceptible to numerical issues~\cite{DBLP:journals/siammax/Foster94,DBLP:books/daglib/0086372,DBLP:books/daglib/0007208}, Conte, Grossi, Loukides, Pisanti, Pissis, and Punzi~\cite{TKDD2025} developed an alternative, combinatorial algorithm for counting $\totalETs$ exactly in $\cO(m\cdot \totalETs)$ time; in fact, their goal was to \emph{assess} whether $\totalETs \geq z$, for some input parameter $z$.

\begin{figure}[t]
    \centering
    \includegraphics[width=\linewidth]{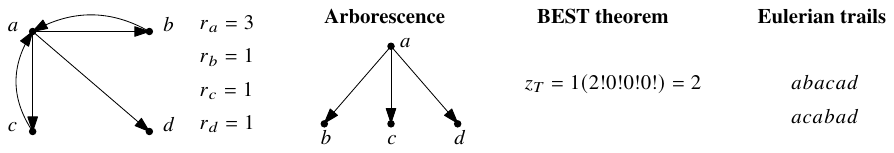}
    \caption{A directed graph, the arborescence rooted at node $a$, and the two Eulerian trails starting at $a$.}
    \label{fig:arbo-trails}
\end{figure}

For enumerating Eulerian trails in directed graphs, one could naturally resort to first enumerating the arborescences in $G$ and then exploiting the insight of the BEST theorem to enumerate the Eulerian trails in $G$: every arborescence corresponds to at least one Eulerian trail. For over two decades, the fastest algorithm for enumerating arborescences in $G$ took $\cO(m\log n + n + \totalArbs \log^2 n)$ time, where $n=|V|$, $m=|E|$, and $\totalArbs$ is the number of arborescences in $G$~\cite{DBLP:conf/isaac/Uno98}. Since Uno's algorithm does not lead to an optimal enumeration of Eulerian trails in directed graphs, we were motivated to develop a direct algorithm for this problem. Interestingly, the algorithm by Conte et al.~can be lifted to an enumeration algorithm with some effort~\cite[Section 5.1]{TKDD2025}, but it still requires $\Theta(m\cdot \totalETs)$ time. 

Our central contribution is a remarkably simple algorithm to \emph{directly} enumerate Eulerian trails in directed graphs in the \emph{optimal} $\cO(m + \totalETs)$ time.\footnote{An implementation of this algorithm written by Jeroen op de Beek is made available at \url{https://github.com/Jeroenodb/Optimal-Enumeration-of-Euler-Trails}.} Thus, we complete the complexity landscape of computing Eulerian trails in undirected and directed graphs; see \Cref{tab:landscape}.
Furthermore, our algorithm has the following important consequences:
\begin{itemize}[noitemsep]
    \item It improves on an implementation of the BEST theorem for counting Eulerian trails in directed graphs when $\totalETs=o(n^2)$---only to \emph{construct} the underlying Laplacian matrix, $\Omega(n^2)$ time is required.
    \item It improves the combinatorial $\cO(m\cdot \totalETs)$-time algorithm of Conte et al.~\cite{TKDD2025} for counting Eulerian trails in directed graphs. Like the algorithm of Conte et al., our algorithm is also combinatorial.
\end{itemize}

\begin{table}[t]
  \centering
\begin{tabular}{lllll}
    \toprule
 & \multicolumn{2}{c}{\textbf{Undirected}} & \multicolumn{2}{c}{\textbf{Directed}}  \\ \cmidrule(r){2-3}\cmidrule(l){4-5}
 & Bound & Reference & Bound & Reference \\ \midrule
\textbf{Decision} & $\cO(m)$ time & Euler's theorem  & $\cO(m)$ time & Euler's theorem  \\
\textbf{Witness} & $\cO(m)$ time & \cite{hierholzer} & $\cO(m)$ time & \cite{hierholzer}  \\
\textbf{Counting} & $\textsf{\#P}$-complete &\cite{DBLP:conf/alenex/BrightwellW05} &  $\cO(\textrm{poly}(m))$ time &BEST theorem\\
\textbf{Enumeration} & $\cO(m + \totalETs)$ time & \cite{DBLP:journals/tcs/KuritaW22}  & $\cO(m + \totalETs)$ time &\textcolor{CWIRed}{[This paper]} \\
\bottomrule
\end{tabular}
\caption{Complexity landscape of computing Eulerian trails in undirected and directed graphs.}
\label{tab:landscape}
\end{table}

The algorithm by Kurita and Wasa~\cite{DBLP:journals/tcs/KuritaW22} for the optimal
enumeration of Eulerian trails in undirected graphs
is based on \emph{reverse search}, a widely-used technique that was introduced in 1996 by Avis and Fukuda~\cite{DBLP:journals/dam/AvisF96}.
We use the same overall approach for directed graphs, but introduce combinatorial insights to avoid the explicit connectivity computation at every step.
In particular, in the undirected case, one crucially needs to track bridges (edges between 2-connected components).
In the directed case, a similar role is played by crossing edges (edges between strongly connected components).
However, there is strong evidence from fine-grained complexity~\cite{6979028} that we cannot dynamically keep track of the latter efficiently enough for our purposes.
In response, we develop combinatorial insights showing that tracking an easier-to-detect subset is sufficient in the context of Eulerian trail enumeration. 
This overcomes the challenges in adapting the algorithm in~\cite{DBLP:journals/tcs/KuritaW22} to directed graphs. Additionally, this avoids the intricate \emph{push-out amortization} technique, which was introduced by Uno~\cite{DBLP:conf/wads/Uno15} in 2015 and used by Kurita and Wasa in~\cite{DBLP:journals/tcs/KuritaW22}, making our algorithm both simple and self-contained. Crucially, avoiding this amortization tool allows us to output a desired number $z\leq \totalETs$ of Eulerian trails in the optimal $\cO(m+z)$ time.

Since outputting $z$ Eulerian trails explicitly requires $\Omega(mz)$ space and time, we output a \emph{compressed} representation: a rooted tree, in which each node represents a prefix of an Eulerian trail, and some inner nodes having a single child may be omitted, and are thus implicit. These implicit nodes represent prefixes of Eulerian trails that can be extended only by a unique edge of the input graph, making the decompression of the output natural---in fact, the same tree representation has been used by Kurita and Wasa~\cite{DBLP:journals/tcs/KuritaW22}.\footnote{This tree representation is standard in enumeration algorithms. Compare it to how the suffix tree of a string is obtained from the trie of its suffixes~\cite{DBLP:conf/focs/Weiner73}.}

In addition to the rich history of theoretical interest in this problem, Eulerian trails have direct applications in bioinformatics~\cite{DBLP:journals/bmcbi/KingsfordSP10} and data privacy~\cite{DBLP:journals/jea/BernardiniCFLP21}. These applications typically operate on directed multigraphs, which motivated us to extend our algorithm to this setting, thus opening the possibility of boosting the efficiency of these applications compared to the existing algebraic computations that are based on the BEST theorem (see~\cite{TKDD2025} for details). Indeed, our direct approach extends to directed multigraphs with little additional effort.

\subsection{Technical Overview and Paper Organization}
\subparagraph{The State Tree.} Assume a directed graph $G$ has $\totalETs>0$ Eulerian trails starting at some fixed node $v_0$ (this restriction to a single start node can easily be removed, but it is conceptually easier). Consider the set of prefixes of these trails.
They can be arranged into a tree where there is an edge between two prefixes if one extends the other by a single edge.
Note that the set of leaves is exactly the set of Eulerian trails.
See \Cref{fig:uncompressed-tree} for an example.
Since every Eulerian trail has length $m$, this tree has size $\cO(m \cdot \totalETs)$.
Since we construct this tree by recursively exploring it, the current node in the tree encodes the state our exploration algorithm is in.
We refer to this tree as the \emph{state tree}, a node as a \emph{state}, and an edge as a \emph{state transition} to avoid any confusion with the nodes and edges from $G$. See \Cref{sec:ST}.
\subparagraph{A Simple $\cO(m \cdot \totalETs)$-time Enumeration Algorithm.}
A first observation is that this full state tree can be constructed by recursively completing prefixes using Hierholzer's algorithm for finding an Eulerian trail.
Given some inner state (i.e., a proper prefix of an Eulerian trail), a single call to Hierholzer's algorithm completes the prefix into an Eulerian trail: it creates a path to a leaf state from that inner state.
Exploiting a simple fact, we can detect which of the newly created inner states are branching, meaning their prefixes can be completed to a different Eulerian trail.
All of these branching states can then be explored recursively in the same fashion.
This already yields an $\cO(m \cdot \totalETs)$-time algorithm, matching the state-of-the-art algorithm by Conte et al.~\cite{TKDD2025}, but with a much simpler algorithm. See \Cref{sec:mz}.
\subparagraph{Compression: Avoiding Repetition in the State Tree.}
The core idea that enables us to obtain a time-optimal enumeration algorithm is that 
Eulerian trails are repetitive in a way that can be exploited to obtain a more compact representation.
States in the state tree with only one child correspond to prefixes that can be extended in exactly one way.
So, in some sense, they do not encode useful information.
One can observe this in \Cref{fig:uncompressed-tree}.
Thus, our goal is to compress the state tree so that every state has at least two child states (in which case, we call it \emph{branching}).
Unfortunately, we cannot quite achieve this. 
Instead, we settle for a slightly weaker result showing that, with the right compression, there can be only a constant number of non-branching states in a row.
This is enough to establish that the compressed state tree has size $\cO(\totalETs)$. This gives the desired output. See \Cref{sec:comp}.
\subparagraph{Our Compression Rules.}
We propose two compression rules.
The first is simple: if a node has exactly one out-edge, then on every Eulerian trail, that edge must be used as soon as that node is reached.
Thus, we can contract the node with its out-neighbor without changing the number of Eulerian trails. See \Cref{sec:comp1}.

The second compression rule exploits that if there is a node that has exactly one out-edge to its own strongly connected component (SCC) and exactly one edge to a different SCC, the edge to the same SCC must be used \emph{first} on all Eulerian trails. Thus, the node can be contracted with its out-neighbor from the same SCC. Unfortunately, exhaustively applying this rule after every state transition seems difficult (and there is evidence from fine-grained complexity to support this). To overcome this difficulty, we propose a weaker (but more technical) rule. See \Cref{sec:comp2}. This compression rule suffices to bound the size of the (compressed) state tree by $\cO(m+\totalETs)$. See \Cref{sec:size}.
\subparagraph{Efficient Exploration.}
Our time-optimal algorithm constructs the state tree similarly to the $\cO(m \cdot \totalETs)$-time algorithm outlined above. See \Cref{sec:explore}.
There is an extra algorithmic challenge: we need to maintain what we call the \emph{compressed remaining graph}: while exploring the state tree we want to have access to $G$ after the Eulerian trail prefix corresponding to the current state has been removed. We also want this remaining graph to be exhaustively compressed using our compression rules. By the simplicity of the compression rules, one compression is easy to execute, so the important fact here is that compressions cannot cascade so that many compressions become necessary at one time. See \Cref{sec:remain}. As in the $\cO(m \cdot \totalETs)$-time algorithm, we will iteratively use Hierholzer's algorithm to extend the partially constructed state tree. See \Cref{sec:Hier}.
\subparagraph{Putting It All Together.}
We combine the above ingredients to form our algorithm. See \Cref{sec:algo}.
Note that, if we want to enumerate $z$ Eulerian trails in $\cO(m+z)$ time, where $z$ is an arbitrary input parameter, 
we can terminate the construction of the state tree early.

The remainder of this paper is organized as follows.
We start in \Cref{sec:prel} with the necessary definitions and notation.
In \Cref{sec:multi}, we show how our algorithm for simple graphs can be extended, with some care, to enumerate Eulerian trails in directed multigraphs in optimal time.
\Cref{sec:final} concludes the paper.

\section{Preliminaries}\label{sec:prel}
A \emph{strongly connected component} (SCC) in a directed graph $G$ is a maximal subgraph in which every node can reach every other node using a directed path. The SCCs in $G$ form a partition of the set of nodes. Thus, without any ambiguity, for any node $v$ in $G$, we write $\scc{v}$ to refer to the SCC of $v$ in $G$.

A \emph{crossing} in a directed graph $G$ is an edge between two different SCCs.
This is consistent with the usage for any edge between two different parts of a partition. In our case, the partition will always be given by the SCCs in $G$.\footnote{Sometimes, this is called a (weak) bridge. However, since the term bridge is used inconsistently in directed graphs, we use the term \emph{crossing} to avoid any confusion. Additionally, intuitions about undirected bridges do not carry over. For our purposes, they can even be counterproductive.}

A \emph{contraction} of two nodes $v$ and $w$ in a directed graph $G$ modifies the graph as follows.
The nodes $v$ and $w$ are removed and a new node $vw$ is added.
This new node has all in-neighbors from $v$ and $w$ as in-neighbors and all out-neighbors from $v$ and $w$ as out-neighbors.
We create a self-loop on the new node $vw$ if and only if both edges $v \to w$ and $w \to v$ exist. This choice is motivated by our application.

We write $z$ for the target number of Eulerian trails to enumerate in a directed graph $G$; and we write $\totalETs(G)$ for the total number of Eulerian trails in $G$. The graph $G$ may be omitted if it is clear from the context.

\section{The State Tree}\label{sec:ST}
The state tree, as it is commonly used in the reverse search framework, is both our core data structure and the compressed output of our main algorithm.
This is the same basic data structure used in the algorithm by Kurita and Wasa for undirected graphs~\cite{DBLP:journals/tcs/KuritaW22}. By definition, explicitly outputting $z$ Eulerian trails requires \(\Omega(mz)\) time and space.
Our main contribution is to construct a tree of size \(\cO(\totalETs)\), where every leaf corresponds to one of the $\totalETs$ Eulerian trails in $G$.
We also show how to explore a subtree of the state tree 
having $z$ leaves in $\cO(m+z)$ time.

In the rest of the paper, we assume the input graph $G=(V,E)$ is a simple and weakly connected 
directed graph with at least one Eulerian trail.
As mentioned in \Cref{sec:intro}, the latter can be verified in $\cO(m)$ time using Euler's theorem.
A witness can then be found in $\cO(m)$ time using Hierholzer's algorithm~\cite{hierholzer}.

Recall that either there is a unique source node $s$ for all Eulerian trails (i.e., a node $s:=v$ with $\outdegree{v} = \indegree{v} + 1$) or the graph contains an Eulerian cycle.
If there is an Eulerian cycle, all Eulerian trails can be rotated to start with an arbitrary fixed node. Henceforth, we assume a fixed source node $v_0$.

We define the \emph{uncompressed state tree} of $G$, denoted by $T^u$, as a rooted tree, where each node of the tree represents a prefix of an Eulerian trail starting at $v_0$.
See \Cref{fig:uncompressed-tree} for an example.
We add an edge from a tree node $s_1$ to another tree node $s_2$ if $s_1$ corresponds to the prefix obtained by removing the last edge from the prefix corresponding to $s_2$. In the rest of the paper, to avoid confusion, we reserve the terms \emph{node} and \emph{edge} for the input graph $G$, and use \emph{state} and \emph{transition} for the (uncompressed) state tree. For a state $s$, we write $\prefix{s}$ for the prefix (of an Eulerian trail) corresponding to $s$.
The following observation is then trivial.

\begin{observation}
    \label{obs:uncompressed-tree-size}
    The size of $T^u$ constructed over an $m$-edge graph $G=(V,E)$ is in $\cO(m \cdot \totalETs)$.
\end{observation}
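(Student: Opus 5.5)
The plan is to charge every state of $T^u$ to a leaf, and then bound the number of states on each root-to-leaf path.

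First, I will show that every state has at least one leaf below it. By definition, each state $s$ corresponds to a prefix $\prefix{s}$ of some Eulerian trail $W$ starting at $v_0$. Every prefix of $W$ extending $\prefix{s}$ is also a state, and consecutive such prefixes are joined by a transition. So following these prefixes of $W$ gives a downward path from $s$ to the state whose prefix is $W$ itself, which is a leaf. Conversely, a leaf cannot be a proper prefix: a proper prefix of an Eulerian trail always has a child, namely the prefix extended by the next edge of that trail. Hence the leaves are exactly the $\totalETs$ Eulerian trails starting at $v_0$.

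Second, I will bound the depth. A state at depth $d$ corresponds to a prefix with exactly $d$ edges, since each transition appends one edge. Every prefix has at most $m$ edges, so each root-to-leaf path contains at most $m+1$ states.

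Third, I combine the two facts. Every state lies on the root-to-leaf path of at least one leaf, so the set of states is contained in the union of the $\totalETs$ root-to-leaf paths. This gives at most $(m+1)\cdot\totalETs$ states. Since $T^u$ is a tree, it has one fewer transition than states, so its total size is $\cO(m\cdot\totalETs)$.

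There is no real obstacle; the only point needing care is the first step, that every state is an ancestor of some leaf. This follows directly from the definition of states as prefixes of actual Eulerian trails rather than arbitrary trails, which rules out dead-end branches.
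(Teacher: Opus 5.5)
Your proposal is correct and follows the same reasoning the paper relies on (the paper calls the observation trivial and, in its technical overview, notes that the leaves of $T^u$ are exactly the Eulerian trails from $v_0$, each of length $m$). You additionally make explicit that every state lies on some root-to-leaf path because states are prefixes of actual Eulerian trails, a step the paper leaves implicit.
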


\begin{figure}[htbp]
    \centering
    \includegraphics[page=13]{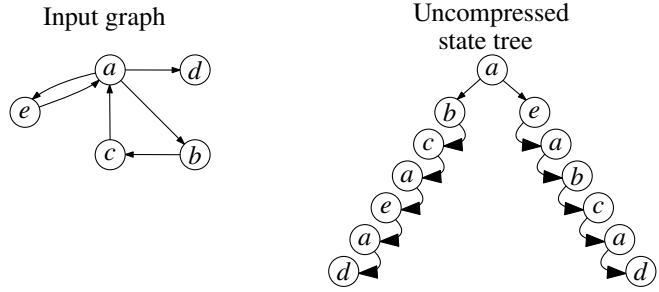}
    \caption{A directed graph and the uncompressed state tree encoding the two Eulerian trails starting at $a$. For readability, each state is only labeled with the last node on the prefix.}
    \label{fig:uncompressed-tree}
\end{figure}

\section{Warm-up: A Simple $\cO(m\cdot\totalETs)$-time Enumeration Algorithm}
\label{sec:mz}

As a warm-up, we describe a slower, $\cO(m\cdot \totalETs)$-time version of our main algorithm. We outline our algorithmic strategy and provide an intuition on why even this simple version ties the state-of-the-art algorithm by Conte et al.~\cite{TKDD2025}. The full proof for this version is omitted, as we later prove the faster version in detail. The rest of the paper refines this simple algorithm to achieve the optimal $\cO(m+\totalETs)$-time complexity.

We construct the state tree in a DFS fashion.
Given a prefix of an Eulerian trail in $G$, we can leverage Hierholzer's algorithm~\cite{hierholzer} to complete it into a full Eulerian trail in $G$.
In this way, given an inner state $s$ (potentially the initial state), we construct a path from $s$ to a new leaf. 
We repeat this until we have constructed a part of the state tree with $z$ leaves (or have explored the entire state tree).

The remaining challenge is to find out which of the states on the path from $s$ to a leaf, constructed by Hierholzer's algorithm, are \emph{branching} (i.e., have more than one child). At these branching states, we must call Hierholzer's algorithm again with a modified prefix to construct the subsequent set of states.

Recall that a crossing is an edge between two SCCs.
This gives \Cref{lem:crossings}, where by $\remainingGraph{s}$ we denote the \emph{remaining graph} after deleting the prefix $\prefix{s}$ of an Eulerian trail from $G$ and then removing isolated nodes.

\begin{lemma}\label{lem:crossings}
    A state $s$ of $T^u$ is branching if and only if the last node on $\prefix{s}$ has at least two out-edges that are not crossings in $\remainingGraph{s}$.
\end{lemma}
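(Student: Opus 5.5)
The plan is to reduce branching to a statement about which out-edges of the current node $u$ (the last node on $\prefix{s}$) can begin an Eulerian trail of $\remainingGraph{s}$ starting at $u$. A child of $s$ is exactly a prefix $\prefix{s}e$ with $e=(u,w)$ such that $\prefix{s}e$ extends to an Eulerian trail of $G$. Equivalently, $\remainingGraph{s}$ must have an Eulerian trail starting at $u$ whose first edge is $e$. Note that $\remainingGraph{s}$ has an Eulerian trail from $u$ ending at the fixed end node $t$, since $s$ is a state. So the degree conditions hold: every node is balanced except possibly $u$ (one extra out) and $t$ (one extra in), or all are balanced if $u=t$. Also, all edges are reachable from $u$. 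Thus $s$ is branching if and only if at least two out-edges of $u$ are \emph{valid first edges}. The lemma then follows from the following claim: an out-edge $e=(u,w)$ is a valid first edge if and only if $e$ is not a crossing in $\remainingGraph{s}$, or $e$ is the unique out-edge of $u$ in $\remainingGraph{s}$. The exceptional case is harmless: with a single out-edge, $s$ has exactly one child, and the "at least two" condition fails too.

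To prove the claim I would use the condensation DAG of $\remainingGraph{s}$. First, observe that an Eulerian trail from $u$ visits the SCCs in a linear order consistent with the DAG. It can never return to an SCC once it has left it, since the DAG is acyclic. Hence it must traverse every edge inside $\scc{u}$ before taking any crossing out of $\scc{u}$. Taking a crossing first is therefore valid only if $\scc{u}$ contains no edges, i.e.\ $u$ is a singleton SCC with no self-loop. One point needs care: if $u$ is a singleton with two outgoing crossings, using either one strands the other, which would contradict the existence of an Eulerian trail. So in that case $u$ has exactly one out-edge. This gives the "only if" direction.

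For the "if" direction, take a non-crossing $e=(u,w)$, so $w\in\scc{u}$. I would show that $\remainingGraph{s}-e$ still has an Eulerian trail from $w$, using the Euler degree conditions plus connectivity of all remaining edges from $w$. The degree conditions are immediate, since removing $e$ shifts the surplus from $u$ to $w$. For connectivity, $w$ reaches $u$ inside $\scc{u}$ along a path avoiding $e$: any path from $w$ to $u$ can be cut at its first visit to $u$, so it never uses the out-edge $e$ of $u$. From $u$, every edge was reachable in $\remainingGraph{s}$. I need this reachability to survive the deletion of $e$. The standard route is the fact that, in a graph satisfying the Euler degree conditions, weak connectivity of the edge set suffices for an Eulerian trail. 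Weak connectivity is preserved because $e$ lies on a cycle inside $\scc{u}$, so its removal cannot disconnect anything.

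The main obstacle I expect is this last step: arguing cleanly that deleting a non-crossing edge keeps the remaining edges weakly connected, including the case where $w=u$ is a self-loop. It is also where I must make sure that deleting isolated nodes does not break the degree bookkeeping. I would handle it by invoking Euler's theorem in its weakly-connected form, as stated in the introduction.
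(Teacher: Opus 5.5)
Your proposal is correct and follows essentially the same route as the paper. Your SCC-ordering argument is the paper's facts (1) and (2): a crossing cannot be taken first while edges of $\scc{u}$ remain, and two crossings would strand one. Your ``if'' direction is fact (3), via Euler's degree criterion plus weak connectivity, and your explicit path from $w$ back to $u$ avoiding $e$, together with the single-out-edge exception, just fills in details the paper states without proof.
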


Intuitively, only non-crossing out-edges correspond to real choices because: 
(1) non-crossing edges must be used before crossing edges; (2) there can be at most one crossing edge at each node; 
and (3) any non-crossing edge is a valid extension of a prefix.
See \Cref{fig:non-crossing-are-choices} for an illustration.

\begin{figure}[tbhp]
    \centering
    \includegraphics[page=12]{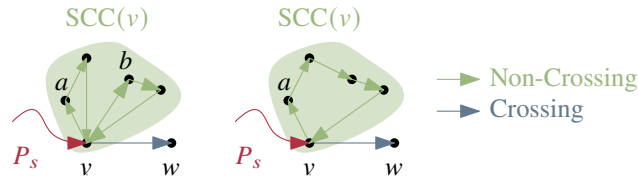}
    \caption{On the left, $\prefix{s}$ can be extended both by $v \to a$ and by $v\to b$ since both of these out-edges are non-crossing. On the right, $\prefix{s}$ can be extended only by $v \to a$. If instead we choose the crossing edge $v \to w$ first, the other edges from $\scc{v}$ would not be reachable anymore.}
    \label{fig:non-crossing-are-choices}
\end{figure}

\begin{proof}[Proof of \Cref{lem:crossings}]
    Let us prove that (1), (2), and (3) are true.
    The lemma then follows.

    For (1), assume $v$ is the last node on $\prefix{s}$ and that, in $\remainingGraph{s}$, it has both a crossing out-edge $v \to w$ and a non-crossing out-edge $v \to a$.
    Assume for contradiction that $\prefix{s} \cdot (v \to w)$ is a prefix of an Eulerian trail.
    Then there is some way to extend it such that $v \to a$ later appears on it.
    This means that $w$ lies on a cycle from $v$ to $v$, and therefore $w$ is in $\scc{v}$.
    So, $v \to w$ was not crossing.
 
    For (2), the argument is similar. Assume for contradiction that $v$ has two crossing out-edges in $\remainingGraph{s}$, $v \to w$ and $v \to w'$. An Eulerian trail must traverse every edge in $\remainingGraph{s}$ exactly once. If the trail leaves the SCC of $v$ via $v \to w$, it can never return to $v$ because, by definition, a crossing connects to a different SCC with no path back. This means the edge $v \to w'$ remains untraversed, which is a contradiction.

    For (3), removing a non-crossing out-edge adjacent to the last node on $\prefix{s}$ (a) maintains Euler's degree criterion and (b) leaves the graph weakly connected. By Euler's theorem, this guarantees the existence of an Eulerian trail in $\remainingGraph{s}$, and thus the extended prefix is valid.
\end{proof}

With \Cref{lem:crossings}, we can efficiently detect branching states. 
Crucially, we do not need to maintain the complete set of crossing edges (which might be difficult as we explain in \Cref{sec:comp2}).
Instead, given a state $s$, we only need to know which out-edges are crossing at the last node on $\prefix{s}$.
As it turns out, this is easy.

\begin{lemma}
    \label{lem:incremental-scc}
    Given a graph $H$ and an Eulerian trail $R = e_1, \dots, e_k$ in $H$, there is an algorithm running in $\cO(\setsize{R})$ time that determines which of the edges $e_i$ are crossings in $H \setminus \{e_1, \dots, e_{i-1}\}$.
\end{lemma}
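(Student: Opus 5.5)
The plan is to reduce crossing detection to a purely combinatorial property of the trail itself, which a single backward scan can check. Write $e_j = (u_j \to u_{j+1})$, so that $u_1, u_2, \dots, u_{k+1}$ is the node sequence of $R$. Set $H_i := H \setminus \{e_1,\dots,e_{i-1}\}$. Since $R$ is an Eulerian trail of $H$, the edge set of $H_i$ is exactly $\{e_i,\dots,e_k\}$, and these edges form the walk $u_i, u_{i+1}, \dots, u_{k+1}$. Here $e_i$ is the first edge of that walk.

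\textbf{Key claim.} $e_i$ is a crossing in $H_i$ if and only if $u_i \notin \{u_{i+2}, \dots, u_{k+1}\}$, that is, the walk never returns to $u_i$ after traversing $e_i$. For the ``if not'' direction, suppose $u_i = u_j$ for some $j \ge i+2$. Then the subwalk $e_{i+1},\dots,e_{j-1}$ goes from $u_{i+1}$ back to $u_i$, so the two endpoints of $e_i$ lie in the same SCC of $H_i$ and $e_i$ is not a crossing. For the converse, suppose $e_i$ is not a crossing. Then there is a directed path from $u_{i+1}$ to $u_i$ in $H_i$, and its last edge has head $u_i$. Every edge of $H_i$ is some $e_j$ with $j \ge i$, whose head is $u_{j+1}$. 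The last edge cannot be $e_i$: its head is $u_{i+1}$, which differs from $u_i$ when $e_i$ is not a self-loop. Hence $u_i = u_{j+1}$ for some $j > i$. Self-loops are never crossings and should be handled separately. They are consistent with the claim once we count the loop itself as a return, i.e.\ we also allow $u_i = u_{i+1}$.

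\textbf{Algorithm.} Scan $R$ from back to front while maintaining the set $S$ of nodes among $u_{i+1},\dots,u_{k+1}$ (with $u_{i+1}$ included, to cover self-loops).
\begin{enumerate}
\item Initialize $S := \{u_{k+1}\}$.
\item For $i = k, k-1, \dots, 1$: add $u_{i+1}$ to $S$.
\item Report $e_i$ as a crossing if and only if $u_i \notin S$, excluding the self-loop case; more precisely, test $u_i$ against $\{u_{i+2},\dots,u_{k+1}\}$, or against $S$ when $e_i$ is a loop.
\end{enumerate}
Equivalently, $e_i$ is a crossing exactly when $i$ is the last index at which $u_i$ occurs as a tail and $u_i$ is not the final node $u_{k+1}$.

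\textbf{Main obstacle.} The only delicate point is achieving $\cO(\setsize{R})$ time rather than $\cO(\setsize{R} + \setsize{V(H)})$ time, since $H$ may have many nodes and the procedure may be called repeatedly. I would therefore implement the membership marks with a global node-indexed array of timestamps that is allocated once. Each call uses a fresh timestamp, so no reset is needed. Then every step is $\cO(1)$, and the total time is $\cO(\setsize{R})$.
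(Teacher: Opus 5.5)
Your proposal is correct and follows the paper's approach: you characterize the crossings as exactly the last out-edges of their tails along $R$, excluding the final node, and detect them with one linear scan. Your two-directional argument via the suffix walk $u_i,\dots,u_{k+1}$ is more careful than the paper's in two ways. It handles self-loops explicitly, and it correctly excludes the last out-edge of the end node $u_{k+1}$ even when $R$ is not closed, whereas the paper states this exception only for cycles.
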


\begin{proof}
    The crossings are precisely the last out-edges at each node, with respect to their order in $R$---except if $R$ is a cycle, in which case the last out-edge at the end node of the cycle is not a crossing.
    The last out-edge from $v$ must be a crossing because all other out-edges from $v$ have been used, and to complete the trail, the algorithm must leave the current SCC (unless the trail is about to end).
    A non-last out-edge $v\to w$ cannot be a crossing because the trail must eventually return to $v$ to use its other out-edges. The last edge of a cycle is: it is the last edge, but it returns to the start node, which is in the same SCC.
\end{proof}

This enables us to detect the branching states on a path from an inner state to a leaf in time linear in the size of that path.
For a path from an inner state $s_1$ to a leaf $s_2$, we apply \Cref{lem:incremental-scc} to $H \coloneqq \remainingGraph{s_1}$ and $R \coloneqq \prefix{s_2} - \prefix{s_1}$. Combining this insight with the DFS strategy based on Hierholzer's algorithm shows that we can construct the state tree in time linear in the number of states we construct. 
This allows us to avoid the application of the push-out amortization technique.
Together with \Cref{obs:uncompressed-tree-size}, which bounds the size of the (uncompressed) state tree, this gives the following result.

\begin{theorem}
    There is an algorithm that, for any directed $m$-edge graph $G=(V,E)$ and any parameter $z \in \N$, computes $z$ Eulerian trails in $G$ (or all $\totalETs$ of them if $\totalETs\leq z$) in $\cO(m\cdot\min(z,\totalETs))$ time.
\end{theorem}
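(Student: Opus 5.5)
The plan is to build the uncompressed state tree $T^u$ by a depth-first exploration driven by Hierholzer's algorithm. The cost of each step should be charged to the states it creates, so that the total time is linear in the number of constructed states. \Cref{obs:uncompressed-tree-size} then converts this into the claimed bound. Concretely, we maintain a stack of pending pairs $(s, e)$, where $s$ is an already constructed branching state and $e$ is a non-crossing out-edge (in $\remainingGraph{s}$) of the last node of $\prefix{s}$ that has not yet been used as the first transition out of $s$. Initially we run Hierholzer's algorithm from $v_0$ on $G$, which builds a root-to-leaf path in $\cO(m)$ time.

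Whenever a path from an inner state $s_1$ to a new leaf $s_2$ is created, we apply \Cref{lem:incremental-scc} with $H \coloneqq \remainingGraph{s_1}$ and $R \coloneqq \prefix{s_2}-\prefix{s_1}$. This marks, in $\cO(|R|)$ time, which edges of $R$ were crossings at the moment they were taken. For each state $s$ on the path, we then read off which out-edges of its last node are non-crossing in $\remainingGraph{s}$. Every out-edge of that node in $\remainingGraph{s}$ appears later in $R$. An edge taken before the node's last out-edge in $R$ is non-crossing at its time of use, and since removing edges only refines SCCs, it was also non-crossing at $s$. The node's last out-edge is the only possible crossing, by item (2) in the proof of \Cref{lem:crossings}. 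So whether $s$ is branching is determined by \Cref{lem:crossings} from the number of remaining out-edges of its last node, together with whether the final one is the unique crossing. All of this can be computed by one backward scan of $R$ that keeps the last occurrence of each node. Each alternative non-crossing edge at a branching state is pushed as a pending pair; their total number is at most the number of children of $s$, so this cost is charged to states of $T^u$. Popping $(s,e)$, we extend $\prefix{s}$ by $e$, which is valid by item (3) of \Cref{lem:crossings}, and complete it with Hierholzer's algorithm on the remaining graph.

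The main obstacle is making each pop cost time proportional to the new path rather than $\cO(m)$, since reconstructing $\remainingGraph{s}$ from scratch would cost $\Theta(m)$ per leaf. Even this would still give $\cO(m \cdot \totalETs)$, which suffices for the theorem. Accepting that simpler route, each Hierholzer call costs $\cO(m)$ and produces exactly one new leaf, so the total time is $\cO(m)$ per leaf. Finally, for the parameter $z$, we stop once $z$ leaves are produced. The number of calls is $\min(z,\totalETs)$, each costing $\cO(m)$, giving $\cO(m\cdot\min(z,\totalETs))$. Correctness holds because every child of every constructed branching state is eventually explored (or the process halts early), so the leaves reached are exactly distinct Eulerian trails starting at $v_0$.
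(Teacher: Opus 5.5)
Your plan follows the paper's warm-up: a DFS over $T^u$ in which Hierholzer's algorithm completes prefixes, with branching states found via \Cref{lem:crossings} and \Cref{lem:incremental-scc}. Rebuilding $\remainingGraph{s}$ from scratch at every pop is a legitimate simplification for an $\cO(m)$-per-leaf bound. Two steps, however, do not go through as written.

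\textbf{Cost of eager pushing.} You charge the pushing of every alternative non-crossing edge to states of $T^u$. That is only sound if those children are eventually built, i.e., when $z\ge\totalETs$. Under early termination, a single path generates $\sum_s (k_s-1)$ pushes, where $k_s$ is the number of non-crossing out-edges at $s$; listing the out-edges of every state in your scan costs as much. On the complete digraph with $m=n(n-1)$, each node is visited with $n-1,n-2,\dots,1$ remaining out-edges, all but at most one non-crossing. So the first path alone produces $\Theta(n^3)=\Theta(m^{3/2})$ pushes, and already $z=2$ violates $\cO(m\cdot\min(z,\totalETs))$. The fix is what the paper does. Mark each branching state only once. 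When it is popped, use the $\cO(m)$ you already spend rebuilding $\remainingGraph{s}$ to compute SCCs, take one non-crossing out-edge not yet used by a child, and re-push $s$ only if another remains. Then every pop yields a new leaf.

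\textbf{The two-out-edge case of the branching test.} Suppose the last node $v$ of $\prefix{s}$ has exactly two remaining out-edges: the one taken now, and its last one $f_j$ in $R$. \Cref{lem:incremental-scc} says $f_j$ is a crossing \emph{when it is taken}. Your monotonicity argument transfers only non-crossing status backwards, and $f_j$ can be non-crossing in $\remainingGraph{s}$. Take edges $v\to a$, $a\to w$, $w\to v$, $v\to w$, $w\to x$ and Hierholzer trail $v\,a\,w\,v\,w\,x$. The edge $v\to w$ is flagged as a crossing when used, yet the root is branching, since $v\,w\,v\,a\,w\,x$ is also Eulerian. For $v\,a\,v\,b$, by contrast, the root is not branching. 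So neither default works, and recomputing SCCs per state costs $\cO(m)$ per state rather than per leaf.

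You need a linear-time criterion. For example, let $R$ have node sequence $x_0,\dots,x_K$ with $f_p=x_{p-1}\to x_p$, and let $L_q$ be the last index of $x_q$. Then, for $p\ge i$, $f_p$ is a crossing of the remaining graph just before $f_i$ iff $\{x_{i-1},\dots,x_{p-1}\}\cap\{x_p,\dots,x_K\}=\emptyset$, i.e., iff $\max_{i-1\le q\le p-1}L_q\le p-1$. The set of these $p$ can be maintained for decreasing $i$ with a stack: pop all entries $\le L_{i-1}$, then push $i$ if $L_{i-1}=i-1$. This gives $\cO(1)$ amortized per state. The paper's sketch is equally terse here, but without such an argument the $\cO(\setsize{R})$ branching detection your running time relies on is not established.
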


We recovered the state-of-the-art result~\cite{TKDD2025} with a simpler algorithm. Although the algorithm in~\cite{TKDD2025} is for assessing $\totalETs$, it can be lifted to enumerate the trails with some effort~\cite[Section 5.1]{TKDD2025}, and it also works for directed multigraphs.

\section{Compression: Avoiding Repetition in the State Tree}\label{sec:comp}
Notice that $T^u$ can be highly repetitive. For example, in \Cref{fig:uncompressed-tree}, the cycle $a \to b \to c \to a$ in the input graph appears in both Eulerian trails from node $a$ to node $d$, so the left and right subtrees in the state tree share duplicate information. We aim to compress $T^u$ such that there are only branching states (states with at least two children) and leaves. We then rely on the following standard fact to bound the number of branching states in terms of the number of leaves (and thus in terms of the number of Eulerian trails). 

\begin{fact}
    \label{fac:limit-branching-states}
    A rooted tree with $k$ leaves has at most $k$ branching states.
\end{fact}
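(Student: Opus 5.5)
The plan is a double-counting argument on edges, or equivalently an induction on the number of states. Here a branching state is one with at least two children, and the tree is finite.

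Let $T$ be a rooted tree with $k$ leaves. Let $b$ be its number of branching states and $u$ its number of unary states (exactly one child), so the number of states is $k+b+u$. Every state except the root has exactly one parent, so $T$ has $k+b+u-1$ transitions. Counting transitions instead by their parent, each branching state contributes at least $2$ and each unary state exactly $1$. This gives
\[
k+b+u-1 \;\geq\; 2b+u,
\]
and hence $b \leq k-1 \leq k$. The only degenerate case is the single-state tree: its root is a leaf, so $k=1$ and $b=0$, and the bound holds.

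An alternative is induction on the height. A leaf gives $b=0$ and $k=1$. For a root with children $c_1,\dots,c_d$ whose subtrees have $k_i$ leaves, the induction hypothesis in the sharper form $b_i\le k_i-1$ gives
\[
b \;\le\; \sum_i (k_i-1) + [d\ge2] \;=\; k - d + [d\ge 2] \;\le\; k-1.
\]

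I expect no real obstacle. The only points needing care are to prove the slightly stronger bound $k-1$ so that the induction closes, and to note that unary states cancel out of the count. This cancellation is exactly why compression will later have to control how many unary states occur, rather than rely on this fact alone.
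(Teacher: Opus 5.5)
Your proof is correct. The double count $k+b+u-1 \ge 2b+u$ gives the sharper bound $b \le k-1$, and the induction closes for the same reason. The paper gives no proof of this fact and simply calls it standard, so your counting argument supplies exactly the omitted justification. Your closing remark also matches how the paper uses the fact: combined with \Cref{lem:high-branching}, each unary state's unique child is not unary, so unary states cannot outnumber branching states and leaves combined.
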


Unfortunately, having only branching states and leaves is not (efficiently) attainable. Instead, we settle for the more modest goal of having only a constant number of non-branching inner states per branching state.
Concretely, we bound the number of non-branching states on a path between two branching states. For our compression, we take a closer look at when non-branching states can occur. If $s_1$ is non-branching (with child state $s_2$), we call the transition $s_1 \to s_2$ \emph{forced} because it is the only way to extend the prefix $\prefix{s_1}$.

\subsection{\Cref{rule:1}: Only One Out-edge}\label{sec:comp1}

The simplest compression case concerns nodes with out-degree one.
Given a state $s$, if there is a node $v$ in the remaining graph $\remainingGraph{s}$ with out-degree one, it is clear that any completion of the prefix $\prefix{s}$ must use this out-edge as soon as it reaches $v$. 
Note that for any state $s$, the last node on $\prefix{s}$ (or just \emph{last node}, for short) is the node where the next edge has to depart from.
Thus, if $w$ is the only out-neighbor of $v$, then in the subtree of the state tree rooted at $s$, any state $s_1$ with current node $v$ has a single child state $s_2$ with current node $w$.
Thus, the transition $s_1 \to s_2$ is forced.
For an example, recall \Cref{fig:uncompressed-tree}.
There, nodes $b$, $c$, and $e$ have only a single out-edge, resulting in forced transitions.

\begin{compression}
    \label{rule:1}
    For any state $s$, if in the remaining graph $\remainingGraph{s}$, there is a node $v$ with out-degree one, contract that node and its only out-neighbor. 
\end{compression}

\begin{figure}[th]
    \centering
    \includegraphics[page=11]{./counting-ets-sketches.pdf}
    \caption{The graph from \Cref{fig:uncompressed-tree} after \Cref{rule:1} has been applied exhaustively, together with the compressed state tree. Compare this smaller state tree to the one from \Cref{fig:uncompressed-tree}.}
    \label{fig:compressed-tree}
\end{figure}

Since contractions are associative and commutative (up to isomorphism), we do not need to specify their order. Note that after applying this compression rule, the graph may not be simple anymore; we allow each node to have one or more self-loops.
To visualize the effect of this rule, see \Cref{fig:compressed-tree}. 
The next lemma proves its soundness.

\begin{lemma}
    For any state $s$, there is a bijection between Eulerian trails in the remaining $\remainingGraph{s}$ and the same graph after the application of \Cref{rule:1}.
\end{lemma}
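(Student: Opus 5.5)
We will reduce the statement to a single contraction and then iterate. Contractions commute and are associative up to isomorphism, so applying \Cref{rule:1} exhaustively is a finite sequence of single contractions. A composition of bijections is a bijection, so by induction on the number of contractions it suffices to handle one step. Fix a graph $H$ (the current remaining graph, with the current start node $u$, namely the last node on $\prefix{s}$). Let $v$ be a node with $\outdegree{v}=1$ and unique out-edge $e=v\to w$. Let $H'$ be the graph obtained by contracting $v$ and $w$ into $vw$. We also need to specify the start node in $H'$: it is $vw$ if $u\in\{v,w\}$, and $u$ otherwise.

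\textbf{Forward map.} Take an Eulerian trail $T$ of $H$. Every occurrence of $v$ on $T$ other than the final position is immediately followed by $e$, since $e$ is the only way to leave $v$. Because $e$ is used exactly once, $v$ is visited at most once as an interior node. It may also appear once more as the end node, but then $\indegree{v}=\outdegree{v}+1$ and $T$ still leaves $v$ once. Define $\phi(T)$ by deleting $e$ from $T$ and renaming $v$ and $w$ to $vw$. Every remaining edge of $H$ maps to the corresponding edge of $H'$ under the contraction's convention. So $\phi(T)$ is a walk in $H'$ using each edge of $H'$ exactly once, and it starts at the prescribed start node.

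\textbf{Edge correspondence.} Here we must check carefully how the contraction treats $v$ and $w$. The edge $e$ disappears. Edges $x\to v$ and $x\to w$ become $x\to vw$, and edges $w\to y$ become $vw\to y$. The edge $w\to v$, if it exists, becomes the self-loop at $vw$, exactly as the paper's convention prescribes (the loop is created precisely when both $v\to w$ and $w\to v$ exist). Simplicity, or the multigraph convention, ensures this correspondence of edges is a bijection $E(H)\setminus\{e\}\to E(H')$.

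\textbf{Inverse map.} Given an Eulerian trail $T'$ of $H'$, we recover $T$ by looking at each arrival at $vw$. If the arriving edge came from an edge into $v$ (including the self-loop, which corresponds to $w\to v$), insert $e$ right after it. Otherwise the arrival was at $w$, and we leave it unchanged. We treat the start node the same way: if $u=v$, insert $e$ at the beginning. The point is that the preimage edge determines whether we are at $v$ or $w$. Since $v$ has only the out-edge $e$, every interior visit to $v$ must be followed by $e$, which makes this insertion forced and unique. The main obstacle is verifying that exactly one such insertion occurs, except when the trail ends at $v$, so that $e$ is used exactly once. We handle this by a degree count: in $H$, the in-degree of $v$ is $1$, or $0$, or $2$ when $v$ is the end node, adjusted by whether $v$ is the start node. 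This means $v$ is entered and left a consistent number of times, and hence the reconstructed $T$ uses $e$ exactly once.

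Once we have checked that $\phi$ and this reconstruction are mutually inverse, we conclude that the single contraction is a bijection. Composing these bijections over the whole sequence of contractions gives the lemma.
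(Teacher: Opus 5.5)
\textbf{Gap: the inverse map fails when $v$ is the end node, and the statement is false there.} Your route is the paper's own: the bijection is the contraction map, injectivity holds because every non-final visit to $v$ is followed by $e$, and surjectivity comes from un-contracting. You just write it out in more detail. The step that fails is the degree count in the inverse map when $v$ is the terminal node $t$ of $H$. Then $\indegree{v}=2$, and your reconstruction is valid only if the \emph{last} edge of $T'$ is the image of one of the two in-edges of $v$. Nothing forces this. $T'$ may end at $vw$ through an edge that originally entered $w$. In that case both arrivals at $v$ are interior, $e$ is inserted twice, and $T'$ has no preimage. Counting how often $v$ is entered cannot tell you which entry comes last. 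The same happens when $v=u=t$ (Eulerian-cycle case, $\indegree{v}=1$): $e$ is inserted at the start and again after the interior arrival through the in-edge of $v$.

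You cannot close this gap, because the statement is false in that case. The paper's one-line surjectivity argument (``undo the contractions'') has the same hole. Take the simple graph with edges $u\to v$, $v\to w$, $w\to p$, $p\to v$, $w\to q$, $q\to w$. It is Eulerian from $u$ to $v$, and $\outdegree{v}=1$. It has exactly one Eulerian trail, $(u,v,w,q,w,p,v)$; taking $w\to p$ first strands the trail at $v$ with $w\to q$ and $q\to w$ unused. Contracting $v$ and $w$ gives the edges $u\to vw$, $vw\to p$, $p\to vw$, $vw\to q$, $q\to vw$. This graph has two Eulerian trails, $(u,vw,p,vw,q,vw)$ and $(u,vw,q,vw,p,vw)$. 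The first one ends through the former edge $q\to w$, which is exactly the failure above. Applying \Cref{rule:1} exhaustively does not help: the graph collapses to one node with two self-loops, which still has two trails.

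What your argument does prove is the lemma with the extra hypothesis $v\neq t$, where $t$ is the terminal node of $\remainingGraph{s}$. Then $\indegree{v}\le 1$, so there is exactly one insertion point. It is the start if $v=u$, and otherwise right after the unique in-edge of $v$. This holds even if that in-edge is the last edge of $T'$: then $w=t$, and the reconstructed trail correctly ends at $w$. So \Cref{rule:1} has to exclude the terminal node.

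A minor point: what makes $E(H)\setminus\{e\}\to E(H')$ a bijection is keeping $x\to v$ and $x\to w$ as separate parallel edges after contraction. The simplicity of $H$ does not give you that.
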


\begin{proof}
    The claimed bijection $f$ is given by the contractions performed.
    To see the claim, take any edge $v\to w$ that is contracted by \Cref{rule:1}.
    Then the out-degree of $v$ in $\remainingGraph{s}$ is one and thus in any Eulerian trail in $\remainingGraph{s}$, $v$ must be directly followed by $w$.
    This proves that $f$ is injective.
    To see that $f$ is surjective, observe that any Eulerian trail in the compressed graph can be extended to an Eulerian trail in the uncompressed graph by undoing the contractions.
\end{proof}

In our algorithm, we will explore the state tree while efficiently maintaining the compressed remaining graph for the current state.
More formally, we have the following lemma.

\begin{lemma}
  \label{lem:rule-1-cascade}
    Let $s_1 \to s_2$ be a transition in the state tree that appends the edge $v \to w$ to $\prefix{s_1}$.
    Assume $\remainingGraph{s_1}$ is fully compressed with \Cref{rule:1}. When generating $\remainingGraph{s_2}$, \Cref{rule:1} needs to be applied at most once (specifically at $v$).
\end{lemma}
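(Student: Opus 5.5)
The plan is to compare out-degrees in $\remainingGraph{s_1}$ and $\remainingGraph{s_2}$ directly. \Cref{rule:1} is triggered only by a node of out-degree exactly one, and in $\remainingGraph{s_1}$ it has been applied exhaustively. So it suffices to show that at most one node can newly reach out-degree one in $\remainingGraph{s_2}$, namely $v$, and that contracting $v$ does not create a further candidate.

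\textbf{Step 1: locate the new candidate.} The transition $s_1 \to s_2$ deletes the single edge $v \to w$ from $\remainingGraph{s_1}$ and then removes isolated nodes. Deleting this edge lowers only the out-degree of $v$ and the in-degree of $w$. Removing isolated nodes changes no other node's degrees. Hence every node $u \neq v$ keeps its out-degree, which was not one by exhaustiveness, so $v$ is the only node that can newly have out-degree one.

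\textbf{Step 2: contract $v$.} If $v$ now has out-degree one, say via the edge $v \to x$, we apply \Cref{rule:1} once and contract $v$ with $x$. The merged node $vx$ inherits the out-edges of $x$. It also inherits the out-edge $v\to x$ as a self-loop, and only when $x \to v$ exists, by the paper's definition of contraction.

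\textbf{Step 3: rule out a cascade.} I expect this to be the main obstacle. The risk is that $vx$ itself has out-degree one. I would argue as follows:
\begin{itemize}
\item Before the contraction, $x$ had out-degree at least two, by exhaustiveness in $\remainingGraph{s_1}$ and since $x \neq v$, so its out-degree is unchanged.
\item The edge $v\to x$ disappears in the contraction. Any edge $x \to v$ becomes a self-loop rather than vanishing, so the out-edges of $x$ keep their count.
\item An out-edge $x \to u$ and the edge $v \to u$ cannot merge, because $v$'s only remaining out-edge goes to $x$.
\end{itemize}
Hence $vx$ has out-degree at least two. All other nodes keep their out-degrees, since a contraction only redirects in-edges. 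Therefore no further application of \Cref{rule:1} is needed.

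Two corner cases remain, and I would dispatch each briefly. First, $v$ may become isolated and be removed; then nothing is applied at all. Second, the contracted graph may contain self-loops or, in principle, parallel edges after a contraction; the out-degree count should treat these as multiset edges, which is consistent with the soundness lemma's bijection.
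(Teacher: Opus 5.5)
Your proposal is correct and matches the paper's argument, which consists exactly of your Step~1: deleting $v \to w$ lowers only $v$'s out-degree, so $v$ is the only new candidate for \Cref{rule:1}; your Steps~2--3 (checking, with parallel edges kept as a multiset, that the merged node and all other nodes keep out-degree $\neq 1$) make explicit the no-cascade part that the paper's one-line proof leaves implicit. One small slip: exhaustiveness gives only $\outdegree{x} \neq 1$, and $x$ can be the terminal node with out-degree $0$, so conclude that $vx$ has the same out-degree as $x$ (hence not one) rather than ``at least two''.
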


\begin{proof}
    Since $\remainingGraph{s_2}$ is obtained from $\remainingGraph{s_1}$ by deleting the edge $v \to w$ and all isolated nodes, the only node at which the out-degree decreases (and thus may become one) is $v$.
\end{proof}

\subsection{\Cref{rule:2}: Crossings}\label{sec:comp2}
By the definition of crossing, a prefix of an Eulerian trail that ends at a node $v$ that has a crossing $v \to w$ must first be extended with the remaining edges in the SCC of $v$ before using the crossing.
Indeed, this follows because after the crossing, the trail cannot return to the SCC of $v$, and therefore cannot use the remaining edges within that component.
See \Cref{fig:possible-scc-compression} for an example.
This leads to the following observation.

\begin{figure}[tbhp]
    \centering
    \includegraphics[page=15]{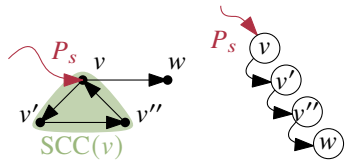}
    \caption{For a state $s$ in the state tree, $\prefix{s}$ is some prefix of an Eulerian trail. On the left, notice that $v \to w$ is a crossing. Thus, $\prefix{s}$ can only be extended by first using the edges from $\scc{v}$ and then using $v \to w$. In the state tree, displayed on the right, the state $s$ therefore only has a single child state.}
    \label{fig:possible-scc-compression}
\end{figure}

\begin{observation}[Candidate Compression Rule]
    \label{obs:forced-crossings}
    Let $s$ be a state and let $v$ be the last node on its prefix $\prefix{s}$.
    If $v$ has at most one out-edge that is not a crossing, $s$ has only one child state.
\end{observation}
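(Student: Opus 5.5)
This observation follows almost directly from \Cref{lem:crossings} and the facts (1)--(3) established in its proof. Let $v$ be the last node on $\prefix{s}$, and work in the remaining graph $\remainingGraph{s}$. If $s$ is an inner state, it has at least one child. The task is therefore to show it cannot have two.

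By \Cref{lem:crossings}, $s$ is branching if and only if $v$ has at least two out-edges in $\remainingGraph{s}$ that are not crossings. The hypothesis says $v$ has at most one such out-edge, so $s$ is not branching. Since $s$ is an inner state, it therefore has exactly one child state.

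For clarity, I would also spell out which edge gives that unique child, using facts (1) and (2) from the proof of \Cref{lem:crossings}. There are two cases.
\begin{itemize}
\item If $v$ has exactly one non-crossing out-edge $v\to a$, then by (1) no crossing out-edge can be appended next. Hence the only valid extension is $\prefix{s}\cdot(v\to a)$, and this extension is valid by (3).
\item If $v$ has no non-crossing out-edge, then by (2) it has at most one crossing out-edge. That edge is then the only possible extension.
\end{itemize}

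The one subtle point is the case where $s$ is a leaf, i.e.\ $\prefix{s}$ is a complete Eulerian trail. There $\remainingGraph{s}$ is empty and $s$ has no children at all. I read ``only one child state'' as ``at most one'', or equivalently as a claim about inner states, so the observation reduces cleanly to the equivalence in \Cref{lem:crossings} and no new combinatorial argument is needed.
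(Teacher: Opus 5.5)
Your proof is correct and follows the paper's reasoning. The paper justifies the observation only by the remark preceding it, that a crossing out-edge at $v$ can be used only after the remaining edges of $\scc{v}$ (fact (1) in the proof of \Cref{lem:crossings}); your appeal to the contrapositive of the ``only if'' direction of \Cref{lem:crossings}, with the zero-non-crossing case handled via (2) and leaves read as having at most one child, is the same argument spelled out more completely.
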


We would like to turn \Cref{obs:forced-crossings} into a compression rule.
This would even be a stronger version of \Cref{rule:1} and would immediately give us that the compressed state tree contains only a few non-branching states.
For our algorithm, it is crucial that after each state transition, we can efficiently maintain the compressed remaining graph.
Unfortunately, this seems difficult for \Cref{obs:forced-crossings}.\footnote{We effectively want a decremental algorithm maintaining strongly connected components with undo support. Unfortunately, the best algorithms give only amortized guarantees for the decremental case and any fast enough algorithm with worst-case guarantees would violate a conditional lower bound. Technically, these general lower bounds do not rule out an algorithm for the graph class of Eulerian graphs, but there is no known decremental algorithm amenable to the undo support we would require. See~\cite{DBLP:journals/siamcomp/BernsteinGW19,DBLP:conf/focs/BernsteinGS20} for the currently best decremental SCC algorithms and~\cite{6979028} for the conditional lower bounds.}
Instead, we will use the following.

\begin{compression}
    \label{rule:2}
    Let $s$ be any state. If there is a node $v$ in $\remainingGraph{s}$ with only one self-loop, exactly one other out-edge, and exactly one other in-edge, remove the self-loop.

    If $v$ is the last node on $\prefix{s}$, we remove the self-loop only if $v$ has no other in-edge.
\end{compression}

At first, this may appear like a restrictive special case (and it is), but we will see that (together with \Cref{rule:1}) it guarantees that in a compressed graph, instances where \Cref{obs:forced-crossings} applies are rare (and do not occur in a row).
See \Cref{fig:type-2-compression} for an example of how these rules act together.

\begin{figure}[tbhp]
    \centering
    \includegraphics[page=16]{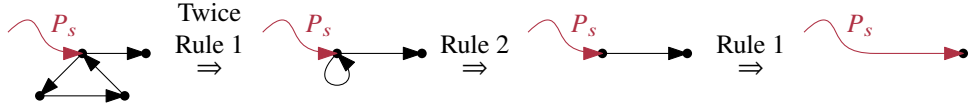}
    \caption{Recall the graph from \Cref{fig:possible-scc-compression}.
    We can see that together \Cref{rule:1,rule:2} compress this graph so that no forced state transition remains.}
    \label{fig:type-2-compression}
\end{figure}

 Before we proceed with proving the size bound, we give \Cref{lem:CR2-sound}, which proves that \Cref{rule:2} is sound.

\begin{lemma}\label{lem:CR2-sound}
    For any state $s$, there is a bijection between Eulerian trails in the remaining $\remainingGraph{s}$ and the same graph after the application of \Cref{rule:2}.
\end{lemma}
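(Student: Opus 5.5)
The plan is to write the bijection down explicitly and show that, in $\remainingGraph{s}$, the self-loop at $v$ is always traversed at a \emph{forced} position. Let $\ell$ be the self-loop at $v$, $v \to w$ its unique other out-edge, and $u \to v$ its unique other in-edge (when it exists). Let $H$ be $\remainingGraph{s}$ with $\ell$ removed. Eulerian trails of $\remainingGraph{s}$ are understood as those continuing $\prefix{s}$, so they start at the last node $x$ of $\prefix{s}$. The map $f$ deletes $\ell$ from a trail. I will show that in every Eulerian trail of $\remainingGraph{s}$, $\ell$ is used at a uniquely determined place. This makes $f$ injective, and reinserting $\ell$ at that place gives the inverse.

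\emph{Case $v \neq x$.} The trail reaches $v$ for the first time through an in-edge. That edge cannot be $\ell$, since $\ell$ starts at $v$, so it must be $u \to v$. At this moment $v \to w$ is still unused: the trail has not been at $v$ before and did not start there. I claim the next edge is $\ell$. Suppose the trail instead continues with $v \to w$. Then both in-edges of $v$ other than $\ell$ are used up, so the trail can never return to $v$, and $\ell$ is never traversed, a contradiction. The trail also cannot stop at $v$ right after $u \to v$, because $v \to w$ is unused. Hence every Eulerian trail contains the consecutive pair $(u \to v)\,\ell$.

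\emph{Case $v = x$ and $v$ has no in-edge besides $\ell$.} This is the only situation with $v = x$ in which the rule fires. The trail starts at $v$. If its first edge is $v \to w$, there is no edge by which to return to $v$, so $\ell$ is never used. Therefore $\ell$ must be the first edge. This is exactly why the rule carries its exception: if $v = x$ and $u \to v$ exists, then both ``$\ell$ first'' and ``$v \to w$ first, later $u \to v$, then $\ell$, ending at $v$'' can be valid, and removing $\ell$ would merge distinct trails.

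For surjectivity, take an Eulerian trail $T$ of $H$. Insert $\ell$ right after the occurrence of $u \to v$ in the first case, or at the very beginning in the second case. The result is still a walk, because $\ell$ begins and ends at $v$. It uses every edge of $\remainingGraph{s}$ exactly once, and its start is unchanged. Applying $f$ to it returns $T$.

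The main obstacle is making the case analysis airtight around the start and end of the trail. In the first case, I must rule out that the trail ends at $v$ before leaving via $v \to w$. I also need that $v \to w$ is unused when $v$ is first entered, which uses $v \neq x$. Finally, I will note that isolated-node removal and the rest of the graph are unaffected, so $f$ really lands in Eulerian trails of $H$.
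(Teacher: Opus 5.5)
Your proposal is correct and follows essentially the same route as the paper: both take the bijection to be deletion of the self-loop $\ell$. Both then show that $\ell$ is forced to sit immediately between the entering edge $u \to v$ and the exit $v \to w$, since the trail cannot return to $v$ after using $v \to w$, and invert the map by reinserting $\ell$. Your explicit case split on whether $v$ is the start node $x$, with trails fixed to start at $x$, makes precise what the paper leaves implicit; only the phrase ``both in-edges of $v$ other than $\ell$'' should read ``the only in-edge $u \to v$''.
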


\begin{proof}
    The claimed bijection $f$ is given by the self-loop removals.
    For simplicity, assume a single self-loop $v \to v$ was removed.
    The result then extends to multiple rule applications by chaining the resulting bijections.

    For injectivity, assume that we have two Eulerian trails $T_1\neq T_2$ with $f(T_1)=f(T_2)$. The difference between $T_1$ and $T_2$ cannot be due to the self-loop: its position is \emph{fixed} because the out-edge at $v$ must be a crossing.
    This holds as the condition on the number of in-edges at $v$ means that $v$ cannot be visited again after using the out-edge.
    Thus, the self-loop must be used directly before the out-edge.
    Also, the self-loop must be traversed exactly once in both trails.
    The difference must therefore lie in a different sequence of edges, which would mean that $f(T_1)\neq f(T_2)$, contradicting our initial assumption.
    Thus, $T_1\neq T_2$ is false.

    For surjectivity, again, observe that any Eulerian trail in the compressed graph can be extended to an Eulerian trail in the uncompressed graph by undoing the self-loop removal. Let $T'$ be any Eulerian trail in the compressed graph.
    Then, since $v$ has an in-edge in the compressed graph by the condition of \Cref{rule:2}, the trail $T'$ must pass through $v$.
    By inserting the self-loop $v \to v$ at this position in $T'$, we can construct an Eulerian trail for the uncompressed graph because the degrees of $v$ remain balanced.
\end{proof}

There is exactly one structural scenario where applications of both rules can cascade into each other, but a slight adjustment to the graph representation can easily avoid this.

\subsection{Handling Induced Two-Way Paths}
\label{sec:cascade-fix}
In particular, a cascade can occur along an \emph{induced two-way path}: a sequence of nodes and edges $v_1 \rightleftarrows \dots \rightleftarrows v_k$, where the inner nodes $v_2, \dots, v_{k-1}$ have exactly in-degree 1 and out-degree 1 in both directions. 
We represent such a path by replacing the inner nodes with a special meta-node annotated with the length $k$ \ie{a constant-size representation}.
If an induced two-way path becomes disconnected from the rest of the graph at one of its endpoints \ie{only one of $v_1, v_k$ has an edge other than those of the induced two-way path}, the entire path can be immediately contracted into a self-loop on the remaining connected endpoint.
When running Hierholzer's algorithm, if an induced two-way path is entered from both ends, we can easily generate all branching states corresponding to the valid prefix and suffix pairs of that path.
If the path is only entered from one end, all state transitions traversing the implicitly represented inner nodes are non-branching and can be safely omitted from the explicit state tree.
We can handle these updates efficiently.

\begin{lemma}
    Let $s_1 \to s_2$ be a transition in the state tree that appends the edge $v \to w$ to $\prefix{s_1}$.
    Then, any induced two-way paths can be updated in $\cO(1)$ time, and \Cref{rule:1} will be applied at most once while updating the remaining graph.
    \label{lem:rule-1-2-cascade}
\end{lemma}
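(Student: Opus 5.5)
We follow the locality argument of \Cref{lem:rule-1-cascade}. We track exactly which nodes change in-degree, out-degree or self-loop count when $\remainingGraph{s_2}$ is produced from $\remainingGraph{s_1}$. The transition deletes the single edge $v \to w$, together with any nodes that become isolated. Only two degrees can drop: the out-degree of $v$ and the in-degree of $w$. Every other node keeps its degrees exactly, so every other node's status with respect to \Cref{rule:1}, \Cref{rule:2} and the induced two-way path representation is unchanged, except where a later contraction or loop removal alters it. The last node also moves from $v$ to $w$, which affects only the exceptional clause of \Cref{rule:2} at $v$ and at $w$.

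The first step handles the induced two-way paths. Deleting $v \to w$ can affect at most the paths incident to $v$ or to $w$, and there are $\cO(1)$ such candidates. Four cases arise, each requiring $\cO(1)$ pointer and length updates to the meta-node:
\begin{itemize}
\item $v \to w$ lies inside a two-way path; the path is then shortened or split at a known position, using the stored length.
\item An endpoint $v$ or $w$ loses its last external edge; the path is then contracted into a self-loop on its other endpoint, which is a constant-size operation because of the meta-node.
\item $v$ or $w$ newly becomes an inner node of degree $1/1$ in both directions; it is then merged into an adjacent meta-node, or a new one is created.
\item None of the above applies, and nothing changes.
\end{itemize}

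The second step is the cascade claim. Since $v$ is the only node whose out-degree drops, $v$ is the only candidate for \Cref{rule:1}, and that rule fires at most once, namely when $v$ is contracted with its unique remaining out-neighbour $x$. After this, two things can trigger further rules.
\begin{itemize}
\item If both $v\to x$ and $x\to v$ exist, the contraction creates a self-loop on $vx$, which may enable \Cref{rule:2} at $vx$. Removing a self-loop lowers the in-degree and out-degree of $vx$ by one each, so it could in turn make $vx$ have out-degree one and fire \Cref{rule:1} again.
\item Applying \Cref{rule:2} at $w$, or at a path endpoint that just received a self-loop, could likewise leave out-degree one.
\end{itemize}
The argument will be that this chain cannot continue. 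The precondition of \Cref{rule:2} requires exactly one other out-edge besides the loop. In the configuration where $v\to x$ and $x \to v$ both exist and the contracted node sees one external in-edge and one external out-edge, the pair $v \rightleftarrows x$ is exactly an induced two-way path, or an endpoint of one. That configuration is handled by the representation of the first step rather than by the rules, because its contraction into a self-loop occurs only when an endpoint becomes disconnected. A \Cref{rule:2} removal therefore leaves a node with out-degree one, but only in a position that is already absorbed into a meta-node, so a second application of \Cref{rule:1} never happens.

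The main obstacle is this last case analysis: showing that every potential alternation \Cref{rule:1} $\to$ \Cref{rule:2} $\to$ \Cref{rule:1} is precisely an induced two-way path. I would first enumerate the local neighbourhoods of $v$ after the deletion, by out-degree of $v$ and by whether $x \to v$ exists. I would then verify in each case that either no further rule applies or the structure is a two-way path segment already collapsed by the meta-node handling. The exceptional clause of \Cref{rule:2} for the last node is used to rule out the remaining case at $w$.
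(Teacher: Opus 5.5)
Your route is the paper's: locality of the deletion, constant-time bookkeeping for the two-way paths, and the claim that an alternation \Cref{rule:1}, \Cref{rule:2}, \Cref{rule:1} can only run along an induced two-way path that has just become attached at one end only, which the representation collapses immediately. The gap is that this claim is the real content of the lemma, and you only announce it. Your plan for it (enumerate neighbourhoods of $v$ by out-degree and by whether $x\to v$ exists) cannot succeed as a purely local check.

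Take, in $\remainingGraph{s_2}$, a node $v$ whose edges are $v\to x$, $x\to v$, $a\to v$, and a node $x$ whose edges are $v\to x$, $x\to v$, $x\to y$. Every local precondition holds: \Cref{rule:1} merges $v$ and $x$ into a node with one loop, one other in-edge and one other out-edge; \Cref{rule:2} deletes the loop; and \Cref{rule:1} fires a second time. Yet $v$ keeps the external edge $a\to v$, so $v\rightleftarrows x$ is attached at both ends and the path representation never intervenes. This configuration is excluded only by the Eulerian degree balance of $\remainingGraph{s_2}$. Here $x$ has one more out-edge than in-edges, so it would have to be the new start $w$. If $x=w$, the last-node clause of \Cref{rule:2} blocks the loop removal because of the in-edge from $a$. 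That degree argument is exactly what the paper invokes (``the induced property then follows from the degree-criterion for Eulerian graphs''). Your case analysis must use it to force the remaining edges at $v$ to be just $v\rightleftarrows x$.

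Your closing accounting is also off. Consider the genuine one-sided case: the only remaining edges at $v$ are $v\rightleftarrows x$, and $x$ has one further in-edge $u\to x$ and one further out-edge $x\to y$. The path collapses into a loop on $x$, and \Cref{rule:2} removes it. Then \Cref{rule:1} really does fire at $x$, which is not inside any meta-node. The count of one holds only because the contraction at $v$ is performed by the path collapse rather than by \Cref{rule:1}. The chain stops there for a different reason than you give: contracting $x$ with $y$ creates a loop only if $y\to x$ exists. Under the preconditions of \Cref{rule:2} this means $u=y$, so $x$ would be an inner node of a longer two-way path, contradicting maximality of the represented path.
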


\begin{proof}
    First, observe that updating the induced two-way paths is easy.
    The only things that can happen are that $v$ leaves one, enters one, or two of them merge at $v$. All of these things can be checked and performed in $\cO(1)$ time. As \Cref{rule:1} was not applicable at $w$ before, the only case in which it can become applicable is if it now cascades through an induced two-way path that is only connected at one end.
    This follows as an induced two-way path is the only case where contracting an edge $u \to v$ not only reduces the in-degree of $v$, but also its out-degree (after an application of \Cref{rule:2}), making \Cref{rule:1} applicable at $v$.
    The induced property then follows from the degree-criterion for Eulerian graphs.
    By the checks we just performed, it would have been turned into a self-loop. Thus, no cascading happens.
\end{proof}

\subsection{The Size of the Compressed Graph with Rules 1 and 2}\label{sec:size}
The core result that underlies our algorithm is that since in each state we compress the remaining graph using \Cref{rule:1,rule:2}, the resulting graph has few non-branching nodes. This follows from \Cref{lem:high-branching}.

\begin{lemma}
    \label{lem:high-branching}
    Assume that in the state tree \Cref{rule:1,rule:2} are applied exhaustively after each transition.
    Then there are never two forced state transitions in a row (i.e., there is no path of length two of forced transitions).
\end{lemma}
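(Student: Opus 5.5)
I would argue by contradiction. Suppose there are two consecutive forced transitions $s_0 \to s_1 \to s_2$ in the compressed state tree. The first transition appends an edge $u \to v$, the second an edge $v \to w$. Both $\remainingGraph{s_0}$ and $\remainingGraph{s_1}$ are exhaustively compressed by \Cref{rule:1,rule:2}.

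By \Cref{lem:crossings}, a transition out of a state whose last node is $x$ is forced exactly when $x$ has at most one non-crossing out-edge. By part (2) of that proof, $x$ also has at most one crossing out-edge. Since \Cref{rule:1} has been applied, no node has out-degree one in the remaining graph, except possibly the last node, where the contraction is realized by the transition itself. I would first settle this edge case by noting that \Cref{rule:1} removes out-degree-one nodes everywhere else. Hence, at a forced state, the last node has exactly two out-edges: one crossing and one non-crossing. The Eulerian degree balance then gives it in-degree two, or one more than two at the start node.

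Step one is to analyze $s_0$ with last node $u$. The forced edge $u \to v$ is the non-crossing out-edge, since crossings must be used last. Step two is to analyze $s_1$. Its last node is $v$, and in $\remainingGraph{s_1}$ the node $v$ again has exactly one non-crossing out-edge $v\to w$ and one crossing out-edge $v \to x$. Balance at $v$ in $\remainingGraph{s_1}$ means $v$ has at most one remaining in-edge, because the edge $u\to v$ has just been consumed and $v$ is now the current endpoint.

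The key structural point is that $v$'s only non-crossing out-edge $v\to w$ must lie on a cycle returning to $v$. Since $v$ has only one remaining in-edge, that in-edge closes the cycle. So $\scc{v}$, restricted to $v$, is a single cycle through $v$ entered via $v \to w$. Now I would trace back to $\remainingGraph{s_0}$. There $v$ had in-edges $u\to v$ and $y \to v$, and out-edges $v\to w$ and $v\to x$.

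I would split on whether $w = v$, i.e.\ $v \to w$ is a self-loop. If it is a self-loop, then in $\remainingGraph{s_1}$ the node $v$ has one self-loop, exactly one other out-edge $v\to x$, and no other in-edge. This is exactly the situation in which \Cref{rule:2} applies even at the last node, so the self-loop should already have been removed, a contradiction. If $w\neq v$, then $v$'s non-crossing cycle passes through other nodes $w,\dots,y$. I would argue that either some node on it has out-degree one, contradicting \Cref{rule:1}, or the structure forces $u$'s transition to have been non-forced. The latter happens because in $\remainingGraph{s_0}$ the node $u$, lying in the same SCC as $v$, would have had a second non-crossing option. \Cref{sec:cascade-fix} handles the degenerate two-way-path configurations, which collapse to self-loops and fall into the first case.

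The main obstacle is the case $w\neq v$. There I need to show carefully that an exhaustively compressed graph cannot contain a cycle through $v$ on which every node other than $v$ has out-degree at least two, while still keeping the transition at $u$ forced. I would try to do this using the Eulerian degree balance together with the SCC structure in $\remainingGraph{s_0}$. If that fails, I would fall back to showing that $u$ itself has in-degree constraints making \Cref{rule:2} applicable at $u$ or $v$ one step earlier.
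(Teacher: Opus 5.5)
Your reduction (at a forced state the last node has exactly one crossing and one non-crossing out-edge) and your self-loop case are fine. The case $w\neq v$, which you flag as the main obstacle, cannot be closed, because the lemma is false there.

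\textbf{Counterexample.} Take the complete digraph on $\{p,q,r,t\}$, replace its edge $p\to q$ by the path $p\to v\to q$, and add the edges $u\to v$, $v\to u$, $u\to y$. Then:
\begin{itemize}
\item $u$ is the source and $y$ the sink. Every other node is balanced, with out-degree $2$ or $3$.
\item There are no self-loops and no inner nodes of induced two-way paths. So the graph is exhaustively compressed, and the root $s_0$ (last node $u$) is a state of the compressed tree.
\item In $\remainingGraph{s_0}$, $\scc{u}$ is everything except $y$. So $u\to y$ is the only crossing at $u$, and $s_0\to s_1$ via $u\to v$ is forced.
\item In $\remainingGraph{s_1}$, \Cref{rule:1} merges $u$ into $y$, producing a sink. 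Now $v\to u$ is a crossing and $v\to q$ is the only non-crossing out-edge of $v$. No rule applies at $v$ (out-degree $2$, no self-loop), so $s_1\to s_2$ via $v\to q$ is forced as well. Directly: taking $v\to u$ forces $u\to y$ and strands the whole $\{p,q,r,t\}$ part.
\end{itemize}
Here $w=q\neq v$, and every node other than $v$ on the cycle $v\to q\to p\to v$ has out-degree $3$. Moreover $u$ has no second non-crossing option, so both horns of your proposed dichotomy fail. The pair $u\rightleftarrows v$ has no inner nodes and both ends carry other edges, so \Cref{sec:cascade-fix} does not collapse it.

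Chaining the gadget makes this worse. Take $v_1\to v_2\to\cdots\to v_k$, a return path from each $v_{i+1}$ to $v_i$ through a fresh copy of the blob $\{p,q,r,t\}$ (entered at $q$, left from $p$), a blob loop at $v_k$, and the sink after $v_1$. This gives $k$ forced transitions in a row, so not even a constant bound on such runs holds.

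\textbf{The paper's argument has the same hole.} The paper disposes of this case in one line, by a different route than yours. It excludes self-loops by \Cref{rule:2}, so $u\neq v$ are joined by the non-crossing edge $u\to v$ and lie in one SCC. It then concludes that this SCC has two outgoing crossings, which is impossible when an Eulerian trail exists. But whether $s_1$ is forced is decided by the crossings of $\remainingGraph{s_1}$, which is $\remainingGraph{s_0}$ minus $u\to v$ (then recompressed). Deleting the forced edge can split $\scc{u}$. In the example, $v\to u$ lies inside $\scc{u}$ in $\remainingGraph{s_0}$, whose only outgoing crossing is $u\to y$. It becomes a crossing only after $u\to v$ is removed.

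So neither your cycle argument nor the paper's SCC argument can be completed. What is missing is a correct statement (or a stronger compression rule), not a better proof of this one.
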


\begin{proof}
    Assume to the contrary that $s_1, s_2, s_3$ are three states such that the two transitions $s_1 \to s_2$ and $s_2 \to s_3$ are forced.
    Then $s_1$ and $s_2$ only have a single child state each.
    For $i\in [1,3]$, write $v_i$ for the last node on $\prefix{s_i}$.
    So, by assumption, appending first $v_2$ and then $v_3$ to $\prefix{s_1}$ is the only way to extend $\prefix{s_1}$ keeping it a prefix of an Eulerian trail.
    Let our analysis proceed in $\remainingGraph{s_1}$.
    This graph is Eulerian because removing a prefix of an Eulerian trail does not change that property. 
    Nodes $v_1$ and $v_2$ have out-degree at least two as otherwise \Cref{rule:1} would be applicable.
    Notice that they have at most one non-crossing out-edge as otherwise the states would be branching.
    Thus, we can conclude that $v_1$ and $v_2$ each have exactly one non-crossing out-edge and one crossing out-edge.
    The edges $v_1 \to v_2$ and $v_2 \to v_3$ must be the non-crossing edges as those must appear before the crossings on any Eulerian trail.

\begin{figure}[tbhp]
\centering
        \includegraphics[page=6]{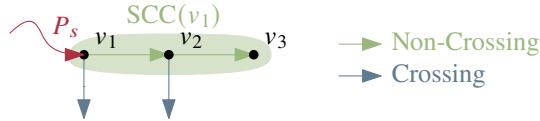}
        \caption{The contradictory situation created in the proof of \Cref{lem:high-branching}. Here, the only way to extend $P_s$ is by making two forced choices in a row (adding $v_1 \to v_2$ and $v_2 \to v_3$).
        In the proof we argue why this cannot happen if $G$ is Eulerian and compressed using \Cref{rule:1,rule:2}.
        In particular, no SCC can have two outgoing crossings.
        }
        \label{fig:high-branching-proof}
    \end{figure}
    None of these edges can be self-loops as otherwise \Cref{rule:2} would have been applicable.
    Thus, $v_1$ and $v_2$ are distinct nodes.
+    Since they are connected by a non-crossing edge, they must be in the same SCC.
    This SCC must now have two outgoing crossings.
    This can never occur in any Eulerian graph (also see~\cite[Lemma 1]{TKDD2025} for a more general version of this observation). 
    See \Cref{fig:high-branching-proof} for an illustration.
\end{proof}

This lemma, together with \Cref{fac:limit-branching-states}, yields the following main result.

\begin{theorem}
    \label{thm:compressed-size}
    The state tree compressed using \Cref{rule:1,rule:2} has size $\cO(\totalETs)$.
    In particular, any state with $z$ leaf-descendants has $\cO(m+z)$ ancestors and descendants.
\end{theorem}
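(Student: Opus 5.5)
The argument is a charging scheme that combines \Cref{lem:high-branching} with \Cref{fac:limit-branching-states}. Consider the state tree in which every state carries the compressed remaining graph, so that each transition appends one edge of the compressed graph. Classify each state as a leaf, a branching state (at least two children), or a non-branching inner state (exactly one child, so its outgoing transition is forced). By \Cref{fac:limit-branching-states}, there are at most $\totalETs$ branching states, and there are exactly $\totalETs$ leaves, since leaves correspond bijectively to Eulerian trails via the bijections in the soundness lemmas for \Cref{rule:1,rule:2}.

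The key step is to bound the non-branching states. Take a non-branching state $s_1$ with child $s_2$. By \Cref{lem:high-branching}, the transition out of $s_2$ cannot also be forced, so $s_2$ is either a leaf or a branching state. Charge $s_1$ to $s_2$. Each state has a unique parent, so each leaf or branching state receives at most one charge. Hence the number of non-branching states is at most $2\totalETs$, and the total size is at most $4\totalETs = \cO(\totalETs)$.

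For the ``in particular'' claim, fix a state $s$ with $z$ leaf-descendants.
\begin{itemize}
\item \textbf{Descendants.} The subtree rooted at $s$ is itself a rooted tree with $z$ leaves, and the no-two-forced-in-a-row property holds inside it. The same charging argument therefore gives $\cO(z)$ descendants.
\item \textbf{Ancestors.} Every transition consumes at least one edge of $G$: contractions and self-loop removals only shrink the edge set, and a prefix has length at most $m$. So the root-to-$s$ path has at most $m+1$ states, giving $\cO(m)$ ancestors.
\end{itemize}
Together this yields $\cO(m+z)$.

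The main obstacle is the subtlety in the hypotheses of \Cref{lem:high-branching}. It must hold in the tree where compression is reapplied after every transition. It must also apply to the root state, which is why the $\cO(m)$ term appears (alternatively, one could compress the initial graph first). Finally, the ``at least one new edge per transition'' argument must remain valid once contracted edges and the induced two-way path meta-nodes from \Cref{sec:cascade-fix} are present. I would check that a transition across a meta-node still corresponds to at least one original edge, so the depth stays bounded by $m$.
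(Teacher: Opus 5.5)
Your argument is the paper's own, made explicit. The paper gets \Cref{thm:compressed-size} in one line from \Cref{lem:high-branching} and \Cref{fac:limit-branching-states}; your charging of each non-branching state to its unique child (a leaf or branching state by the lemma), plus the depth bound $m$ for ancestors, spells that line out, so given the paper's earlier results it is complete. Since you named \Cref{lem:high-branching} as the crux, note that its proof takes the second out-edge of $v_2$ to be a crossing of $\remainingGraph{s_1}$, whereas $s_2$ being non-branching only makes it a crossing of $\remainingGraph{s_2}$; in the simple, already compressed graph with source $v_1$, sink $y$ and edges $v_1\to v_2$, $v_2\to v_1$, $v_1\to y$, $v_2\to a$, $a\to b$, $a\to c$, $b\to c$, $b\to d$, $c\to a$, $c\to d$, $d\to b$, $d\to v_2$, appending $v_1\to v_2$ and then $v_2\to a$ are both forced transitions, so the lemma fails as stated and your proof shares the paper's gap at that step.
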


\section{Enumeration Algorithm}\label{sec:explore}
We have established that the compressed state tree is small enough. 
Our next step is to show how to efficiently construct it via a DFS exploration, thereby enumerating all Eulerian trails. 
This approach requires two main ingredients: (1) an efficient way to maintain the compressed remaining graph for the current state; 
and (2) a method to extend the partial state tree.
In the usual DFS fashion, we would add new states one-by-one, but we need to be more clever and batch some operations.
Concretely, we will extend the state tree by adding a full path of states from the current state to a new leaf of the state tree at a time.
Crucially, we make sure that the remaining graph for each state is always exhaustively compressed.

\subsection{Keeping the Remaining Graph Updated and Compressed}\label{sec:remain}
The main observation towards efficiently computing a compressed state tree is that removing a single edge only triggers a constant number of compression operations.
In particular, compression operations do not cascade, that is, a single compression
does not trigger a chain reaction that creates many new compressions.
\Cref{lem:fast-compression} formalizes this.

\begin{lemma}
Given a graph $G$, exhaustively compressed by \Cref{rule:1,rule:2}, and an edge $e$, we can compute $G-e$, after \Cref{rule:1,rule:2} have been applied exhaustively, in $\cO(1)$ time.
\label{lem:fast-compression}
\end{lemma}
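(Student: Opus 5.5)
The plan is a local-change argument: deleting $e = v\to w$ only alters degrees at $v$ and $w$, so only a constant number of nodes can newly satisfy the preconditions of \Cref{rule:1} or \Cref{rule:2}. We show that each triggered compression again affects only $\cO(1)$ nodes and that the cascade stops after a constant number of steps. We assume an adjacency-list representation with in- and out-lists, degree counters, and self-loop counters, together with the meta-node representation of induced two-way paths from \Cref{sec:cascade-fix}. Under this representation, deleting an edge, removing a self-loop, and testing either rule's precondition at a given node all take $\cO(1)$ time. For contractions we use a standard trick: since the contracted node $v$ has out-degree one, we splice its single out-edge away and relabel incoming edges lazily, for instance by redirecting via a pointer or union-find-free forwarding. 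This avoids copying adjacency lists, and undo support needs only a constant-size log per operation.

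First, $G-e$ differs from $G$ only in $\outdegree{v}$ and $\indegree{w}$. By \Cref{lem:rule-1-cascade}, \Cref{rule:1} can newly apply only at $v$ (when $\outdegree{v}$ drops to one). \Cref{rule:2} can newly apply only at $v$ or $w$, and only if one of them now has a single self-loop, exactly one other out-edge, and exactly one other in-edge. The special clause for the last node of the prefix is likewise only relevant at $v$ or $w$. We check these $\cO(1)$ candidate nodes directly.

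Second, we bound the cascade. A \Cref{rule:2} application removes a self-loop at $u$, which lowers both $\outdegree{u}$ and $\indegree{u}$ by one. Afterwards $u$ has out-degree one, so \Cref{rule:1} may fire at $u$ and contract $u$ with its unique out-neighbour $x$. That contraction in turn changes degrees only at the merged node $ux$, and possibly creates a self-loop there if $x\to u$ existed. The dangerous situation is therefore exactly an alternation of \Cref{rule:2} and \Cref{rule:1} running along a chain of nodes with two-way edges, that is, an induced two-way path. By \Cref{lem:rule-1-2-cascade}, with the meta-node representation such paths are maintained in $\cO(1)$ time and \Cref{rule:1} fires at most once. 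Combining this with the case analysis above gives a constant bound on the total number of rule applications, namely at most one \Cref{rule:1} application plus $\cO(1)$ \Cref{rule:2} self-loop removals at $v$, $w$, and the merged node. Each application costs $\cO(1)$.

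The main obstacle is this second step: proving rigorously that no cascade escapes the induced-two-way-path case. Concretely, after a contraction creates a self-loop at $ux$, we must show that a further \Cref{rule:2} application at $ux$ cannot lead to another \Cref{rule:1} application at a new node outside such a path. We would argue this with the Eulerian degree balance: a node reaching out-degree one after a self-loop removal must have had in- and out-degree two with one two-way neighbour, which is precisely the induced-path structure already collapsed. Finally, exhaustiveness follows because every node whose rule preconditions could have changed lies among the constantly many nodes we explicitly re-check.
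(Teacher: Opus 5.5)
Your proposal is correct and takes essentially the same route as the paper. Both localize the degree changes to $v$ and $w$, use \Cref{lem:rule-1-cascade} for \Cref{rule:1}, rely on the induced two-way-path handling of \Cref{lem:rule-1-2-cascade} to rule out cascades, and charge $\cO(1)$ per rule check and per operation.

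One small tightening: the constant cost of a contraction follows directly from Eulerian balance. A node of out-degree one has in-degree at most two, so its at most three edges can be re-pointed eagerly. You therefore do not need lazy forwarding pointers, whose chains could grow over repeated contractions and break the constant lookup bound.
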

\begin{proof}
    Let $e=v\to w$. Removing $e$ decreases the out-degree of exactly one node, namely $v$.
    Thus, \Cref{rule:1} can become applicable only at $v$ (or nowhere).
    All resulting applications of the rule are handled in $\cO(1)$ time by \Cref{lem:rule-1-cascade,lem:rule-1-2-cascade}.

    Similarly, \Cref{rule:2} can become applicable only at $v$ or $w$.
    At each of these, it can only be applied once (since by the condition of \Cref{rule:2} they have exactly one self-loop).
    At each of these nodes, \Cref{rule:1} can become applicable, but nowhere else.
    As argued above, potentially applying \Cref{rule:1} does not cause further compressions.
    Finally, note that checking the conditions of the rules at $v$ and $w$, the contraction of two nodes and the removal of a self-loop can be implemented in $\cO(1)$ time.
\end{proof}

This implies that the state tree can be explored in a DFS fashion while keeping an up-to-date compressed version of the remaining graph with constant overhead per state transition. Note that by storing the performed compression operations on a stack, they can be undone in the same constant overhead per state transition while the DFS exploration backtracks.
Additionally, since by the above arguments the compression does not cascade, the following holds.
\begin{corollary}
    \label{rem:initial-compression}
    We can exhaustively compress any directed $m$-edge graph in $\cO(m)$ time.
\end{corollary}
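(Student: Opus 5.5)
The plan is to compress the input graph incrementally, as a sequence of single-edge updates, so that \Cref{lem:fast-compression} does essentially all the work. The graph itself starts uncompressed, so I first describe a process, then bound its cost.

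First, I scan all $n \le m+1$ nodes once (weak connectivity gives $n\le m+1$) and build adjacency lists together with in- and out-degree counters. I also build a worklist of nodes at which \Cref{rule:1} or \Cref{rule:2} is currently applicable. Checking applicability at a single node takes $\cO(1)$ time, since both rules inspect only the degree counters and at most three incident edges. This step therefore costs $\cO(n+m)=\cO(m)$.

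Next, I repeatedly pop a node from the worklist and, if a rule still applies there, perform the corresponding operation. That operation is either a contraction of $v$ with its unique out-neighbour, or the removal of a self-loop. After each operation, I recheck only the $\cO(1)$ nodes whose degrees or incident edges changed, and push any newly applicable ones onto the worklist. The total time is charged through a potential argument. Every application of \Cref{rule:1} reduces the number of nodes by one, and every application of \Cref{rule:2} reduces the number of edges by one. Hence at most $n+m=\cO(m)$ rule applications occur in total. By the locality arguments from the proofs of \Cref{lem:rule-1-cascade,lem:rule-1-2-cascade,lem:fast-compression}, each application changes the applicability status of only $\cO(1)$ nodes, namely the endpoints involved. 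The same holds for updating the induced two-way path meta-nodes. So the worklist receives $\cO(m)$ insertions overall, and the whole process runs in $\cO(m)$ time. Since contractions are associative and commutative up to isomorphism, and by the soundness lemmas, the final graph is the exhaustively compressed one, whatever order the rules were applied in.

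The main obstacle I expect is making each contraction cost $\cO(1)$. Naively merging the adjacency lists of $v$ and $w$ takes time proportional to their degrees. Here I would use the fact that \Cref{rule:1} applies only when $\outdegree{v}=1$. Then $v$ contributes a single out-edge, and since the graph is Eulerian its in-degree is at most two. So merging $v$ into $w$ means redirecting at most $\cO(1)$ edge endpoints, with $w$'s lists kept intact, using doubly linked lists for splicing. As a fallback, I would instead add the $\cO(m)$ edges one at a time and invoke \Cref{lem:fast-compression} in reverse (deletions undone). That route, however, needs care to ensure that the graph stays Eulerian at intermediate steps.
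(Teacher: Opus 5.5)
Your proof is correct, but it gets the bound by a different mechanism than the paper. The paper gives only a one-line justification: the corollary is read off from \Cref{lem:rule-1-cascade,lem:rule-1-2-cascade,lem:fast-compression}, i.e., from the claim that compressions do not cascade. Those lemmas concern deleting one edge from an \emph{already compressed} graph. On a raw input, cascades do occur. For instance, along a pendant induced two-way path, each contraction by \Cref{rule:1} creates a self-loop that \Cref{rule:2} removes, which re-enables \Cref{rule:1} at the merged node. The paper presumably absorbs these through its meta-node representation. Your worklist argument does not need non-cascading at all. Every application of \Cref{rule:1} deletes a node and every application of \Cref{rule:2} deletes an edge, so there are at most $n+m$ applications, each triggering $\cO(1)$ rechecks; any cascade is paid for by the shrinking graph. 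You also supply what \Cref{lem:fast-compression} merely asserts, namely why one contraction costs $\cO(1)$. Euler's balance condition forces $\indegree{v}\le 2$ whenever $\outdegree{v}=1$, so only $\cO(1)$ endpoints are redirected. You should say explicitly that both rules preserve degree balance and weak connectivity, so this bound holds throughout. The paper's route buys brevity by reusing the dynamic machinery; yours is self-contained and more robust. Two small remarks. First, your claim that the result is independent of the order of rule applications is unnecessary, since an empty worklist already certifies that no rule applies anywhere. It is also not established by what you cite: the soundness lemmas give bijections of trails, not confluence. Indeed, contracting an out-degree-one in-neighbor of in-degree two into a node can destroy the in-edge condition of \Cref{rule:2} there, so different orders can end in different graphs. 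Second, the fallback via re-inserting edges can simply be dropped.
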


\subsection{Exploring Using Hierholzer's Algorithm}\label{sec:Hier}

As in \Cref{sec:mz}, we will iteratively use Hierholzer's algorithm to extend a partially constructed state tree.
However, this time the state tree will be compressed using \Cref{rule:1,rule:2}.
Essentially, we will perform a DFS, where Hierholzer's algorithm will serve as the subroutine that adds a new path from a not-fully-explored inner state to a new leaf. To obtain a time-optimal algorithm, we must analyze the cost of these calls more carefully than in \Cref{sec:mz}, where their cost was linear in the number of states in the uncompressed tree. Instead, we will show that the cost is linear in the number of branching states created.

\begin{lemma}
    \label{lem:fast-hierholzer}
    Given a partially constructed compressed state tree and a branching inner state $s$, where one subtree has not been constructed, we can construct a new state-path $R$ from $s$ to a leaf in $\cO(\setsize{R})$ time.
    Here, $\setsize{R}$ is the number of newly created states in the compressed state tree.
\end{lemma}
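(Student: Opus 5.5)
We plan to run Hierholzer's algorithm on the compressed remaining graph $\remainingGraph{s'}$, where $s'$ is the not-yet-constructed child of $s$. Concretely, we pick a non-crossing out-edge of the last node of $\prefix{s}$ that has not yet been explored. By \Cref{lem:crossings} such an edge exists, since $s$ is branching and one subtree is missing. Appending this edge gives $\prefix{s'}$. We update the compressed remaining graph in $\cO(1)$ time by \Cref{lem:fast-compression}, and from there Hierholzer's algorithm produces an Eulerian trail $R'$ of the compressed remaining graph. Along the way, every edge traversal is one transition in the compressed state tree. For each such step we maintain the compressed graph via \Cref{lem:fast-compression} and push the compression operations onto a stack, so that the later DFS backtracking can undo them. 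The key point is to run Hierholzer's algorithm on the \emph{compressed} graph, whose edges correspond to the explicit states of the compressed tree. Otherwise, the cost would be proportional to the uncompressed path length.

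Next we classify the states on the produced path as branching or not. For this we apply \Cref{lem:incremental-scc} to $H \coloneqq$ the compressed $\remainingGraph{s'}$ and the trail $R'$, which marks in $\cO(\setsize{R'})$ time which traversed edges were crossings at the moment they were taken. Here we must check that compressions performed mid-trail do not break this lemma. Contractions by \Cref{rule:1} and self-loop removals by \Cref{rule:2} preserve the Eulerian-trail structure by the two soundness lemmas. So the rule ``crossing iff last out-edge used at that node'' still holds if we identify nodes up to contraction, for instance by using union-find-free bookkeeping, since contractions only merge with an out-degree-one node. With this classification, a state on the path is branching exactly when its last node has at least two non-crossing out-edges; this is where \Cref{lem:crossings} applies. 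We record the branching states so that the DFS can later revisit them.

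Finally, we bound the cost. Hierholzer's algorithm is linear in the number of edges of the compressed graph it traverses, plus the $\cO(1)$ maintenance per step. Handling induced two-way paths costs $\cO(1)$ per meta-node as in \Cref{sec:cascade-fix}. Hence the total work is $\cO(\setsize{R'})$. By \Cref{lem:high-branching}, no two forced transitions occur in a row, so at least half of the explicit transitions on the path end in newly created states of the compressed tree. This gives $\setsize{R'} = \cO(\setsize{R})$.

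The main obstacle is ensuring that Hierholzer's splicing of sub-cycles is compatible with maintaining the compressed graph incrementally in trail order. Hierholzer's algorithm discovers edges out of trail order, whereas compression updates must be applied in prefix order. We will try first to run a plain Hierholzer pass on a snapshot of the compressed graph to obtain the edge sequence. We then replay that sequence in order, applying the updates of \Cref{lem:fast-compression} step by step; both passes are linear. Any edge that compression removes or contracts mid-replay is absorbed into the corresponding contracted node, so the replay stays consistent with the snapshot's trail.
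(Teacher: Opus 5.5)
There is a genuine gap in your cost accounting, at the exact point this lemma is about. Your first paragraph asserts that every edge traversal of Hierholzer's algorithm is one transition of the compressed state tree. Your final paragraph then switches, correctly, to running Hierholzer on a snapshot of $\remainingGraph{s'}$ and replaying it, and you observe that compressions triggered during the replay absorb edges of the snapshot trail $R'$. So $R'$ is in general longer than the new state-path, the snapshot pass costs $\Theta(\setsize{R'})$, and everything depends on showing $\setsize{R'} = \cO(\setsize{R})$.

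You justify this with \Cref{lem:high-branching}, but that lemma only constrains the shape of the compressed tree: no two forced transitions occur in a row among the states that \emph{survive} compression. It says nothing about how many snapshot edges disappear into \Cref{rule:1} contractions or \Cref{rule:2} self-loop removals during the replay. Also, every state on the new path is newly created, so ``at least half of the explicit transitions end in newly created states'' carries no information; the number of explicit transitions simply equals $\setsize{R}$. Even if you meant that only branching states are materialised, \Cref{lem:high-branching} relates explicit transitions to branching states, not snapshot edges to explicit transitions.

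The missing idea, which is the core of the paper's proof, is to charge absorbed edges to explicit steps:
\begin{itemize}
\item Before the replay starts, $\remainingGraph{s'}$ is exhaustively compressed. So any compression performed during the replay must have been enabled by the explicit removal of an earlier edge on the path.
\item By \Cref{lem:fast-compression}, each such removal triggers only $\cO(1)$ compressions, and these do not cascade.
\item Each compression (one contraction or one self-loop removal) eliminates $\cO(1)$ edges of $R'$.
\end{itemize}
Hence the absorbed edges number $\cO(\setsize{R})$, which gives $\setsize{R'} = \cO(\setsize{R})$ and an $\cO(\setsize{R})$ bound on the snapshot Hierholzer pass. \Cref{lem:high-branching} is not needed for this lemma; it enters only in the global size bound of \Cref{thm:compressed-size}.

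With this argument substituted, the rest of your pipeline matches the paper's algorithm: committing to an unexplored non-crossing edge, a snapshot Hierholzer pass, an in-order replay with $\cO(1)$ compression updates per step, and marking branching states via \Cref{lem:incremental-scc}.
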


\begin{proof}
    Note that $R$ corresponds to an Eulerian trail in the remaining graph $\remainingGraph{s}$.
    Hierholzer's algorithm finds this specific trail.
    Our goal is to show that the lengths of this trail and $R$ differ only by at most a constant factor.
    This allows us to conclude that running Hierholzer's algorithm on $\remainingGraph{s}$ takes $\cO(\setsize{R})$ time.
    Informally, we create a number of states that is worth the running time we spent.
    For this, we show that not too much of the trail is removed by compression.

We classify all edges in the Eulerian trail computed by Hierholzer's algorithm in $\remainingGraph{s}$ into three categories: (1) those corresponding to branching states; (2) those corresponding to non-branching states that also exist in the compressed state tree; and (3) those corresponding to non-branching states that are removed by compression.
Clearly, states in categories (1) and (2) are bounded by $\setsize{R}$.

Since $\remainingGraph{s}$ is exhaustively compressed, any states in category (3) must become compressible by the removal of some (earlier) edge on $R$.
In particular, as argued in the proof of \Cref{lem:fast-compression}, an edge in category (1) or (2) can cause only a constant number of possible compressions later on $R$.
In the same proof, we also show that each compression that is caused in this fashion does not cascade. 
As we start off with an exhaustively compressed graph, this is the only way new applications of the compression rules can become possible.
Thus, the number of states in category (3) is bounded by the number of states in categories (1) and (2).

In total, the Eulerian trail constructed has length $\cO(\setsize{R})$ and thus the execution of Hierholzer's algorithm takes $\cO(\setsize{R})$ time, yielding the desired result.
\end{proof}

On a technical note, we need to be a bit careful about picking the state $s$ at which to add a new path to a leaf.
Recall that we may want to output a desired number $z\leq \totalETs$ of Eulerian trails in the optimal $\cO(m+z)$ time.

Formally, if $z \leq \totalETs$, we can construct a subtree with $z$ leaves and $\cO(m+z)$ inner states by adding the new inner state to leaf paths in a DFS fashion; that is, processing a branching state with missing children only after all its already constructed children have no unconstructed children themselves.

\subsection{Summary of the Algorithm}\label{sec:algo}
Despite proving many technical statements---like bounding the state tree size, the soundness of compression rules, and detailing efficient subroutines---the final algorithm is simple to state. See \Cref{alg:summary}.

\begin{algorithm}[tbhp]
    \begin{algorithmic}[1]
    
    \Require{$G=(V, E)$ is weakly connected and Eulerian 
    with source $v_0$, and $z>0$ is an integer parameter.}
        \State $T \gets \{(v_0, [])\}$ \Comment{State tree only containing the empty prefix at $v_0$}
        \State Exhaustively compress $G$ \Comment{\Cref{rem:initial-compression}}

        \While{$T$ has fewer than $z$ leaves}
            \State Take a branching state $s$ with an unexplored subtree (if there is none, break)
            \State Add a path $R$ from $s$ to a new leaf using Hierholzer's algorithm \Comment{\Cref{lem:fast-hierholzer}}
            \State Compress $R$ \Comment{\Cref{lem:fast-compression}}
            \State Mark the branching states with unexplored subtrees on $R$ \Comment{\Cref{lem:incremental-scc}}
        \EndWhile

        \State{\textbf{Output} $T$}

    \end{algorithmic}
    \caption{Computing a subtree of the compressed state tree with $z$ leaves}
    \label{alg:summary}
\end{algorithm}

We also give the properties of this algorithm as a formal statement, summarizing our main contribution.

\begin{theorem}\label{thm:main}
    For any directed $m$-edge graph $G=(V,E)$ and any parameter $z \in \N$, \Cref{alg:summary} computes a compressed representation of $z$ Eulerian trails in $G$ (or all $\totalETs$ of them if $\totalETs\leq z$) in $\cO(m+\min(z,\totalETs))$ time.
\end{theorem}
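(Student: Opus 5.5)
The proof combines correctness, which follows from the soundness lemmas, with a running-time bound obtained by charging each step of \Cref{alg:summary} to the states it creates. The output-size bound then comes from \Cref{thm:compressed-size}.

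\emph{Correctness.} By the two soundness lemmas for \Cref{rule:1} and \Cref{rule:2}, compressing the remaining graph preserves a bijection between its Eulerian trails and those of the uncompressed graph. Every leaf of the constructed tree therefore decodes, by undoing contractions and reinserting self-loops, to a distinct Eulerian trail of $G$ starting at $v_0$. Completeness follows from \Cref{lem:crossings} together with \Cref{lem:incremental-scc}. Every branching state on a newly added path $R$ is marked, and the loop only terminates once no branching state with an unexplored child remains. At that point every child of every state has been explored, so every Eulerian trail is a leaf. In the other case, the loop stops exactly when $z$ leaves are present.

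\emph{Running time.} Initial compression costs $\cO(m)$ by \Cref{rem:initial-compression}, and computing $v_0$ and verifying Eulerianity costs $\cO(m)$ by Euler's theorem. Consider one iteration that adds a path $R$. Hierholzer's algorithm costs $\cO(\setsize{R})$ by \Cref{lem:fast-hierholzer}. The compression maintenance along $R$ costs $\cO(1)$ per edge by \Cref{lem:fast-compression}. By the category argument in the proof of \Cref{lem:fast-hierholzer}, the number of edges is $\cO(\setsize{R})$. Marking branching states costs $\cO(\setsize{R})$ by \Cref{lem:incremental-scc}. Backtracking in DFS order undoes the stored compression operations at $\cO(1)$ each, and every operation is undone at most once, so this cost is absorbed by the cost of creating them. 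Selecting the next branching state takes $\cO(1)$ with a stack of marked states. Summing over iterations, the total cost is $\cO(m)$ plus a constant times the number of states created.

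\emph{Bounding the number of states.} This is the main obstacle. Each iteration adds exactly one new leaf, so at most $\min(z,\totalETs)$ iterations occur. I would invoke \Cref{thm:compressed-size}. By \Cref{lem:high-branching}, at most one forced transition occurs between consecutive branching states. Hence the constructed subtree, which has $k=\min(z,\totalETs)$ leaves, has at most $k$ branching states by \Cref{fac:limit-branching-states}, and $\cO(k)$ non-branching states below the first branching state.

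The remaining concern is the initial path from the root, which may be long. Its length is bounded by the $\cO(m)$ term, since it corresponds to at most $m$ edges. The care needed here is to ensure that the DFS order genuinely produces a subtree whose leaves are the only completed paths. Processing the deepest pending branching state first guarantees this: every explored path ends in a leaf we count, so the charged states are exactly those of a subtree with $k$ leaves. Altogether the running time is $\cO(m+\min(z,\totalETs))$.
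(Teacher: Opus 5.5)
Your correctness argument and the per-iteration cost accounting (via \Cref{rem:initial-compression,lem:fast-hierholzer,lem:fast-compression,lem:incremental-scc}, plus undoing the stored compressions) are fine and match how the paper assembles the theorem. The gap is in the step you yourself flag as the main obstacle: bounding the number of states of the constructed subtree $T'$. You apply \Cref{fac:limit-branching-states} to $T'$, where ``branching'' means having at least two \emph{constructed} children. But \Cref{lem:high-branching} only controls forced states, i.e., states with a single child in the \emph{full} compressed state tree. Some inner states of $T'$ branch in the full tree but still have an unexplored child when the loop stops. These are neither branching in $T'$ nor forced, so neither result bounds them. Nothing in your argument bounds their number by $\cO(k)$ or confines them to the initial segment from the root.

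This is a genuine gap, not just loose wording. The property you attribute to the DFS order (every explored path ends in a counted leaf, so $T'$ is a subtree with $k$ leaves) holds for \emph{any} rule for picking $s$ in \Cref{alg:summary}. Your argument would therefore equally ``prove'' the bound for shallowest-first exploration, where it is false. For instance, take a chain $c_1\to c_2\to\dots\to c_n$ in which each $c_i$ also lies on two 2-cycles $c_i\to a_i\to c_i$ and $c_i\to b_i\to c_i$. Its compressed state tree is essentially a complete binary tree of depth $\Theta(m)$, and $k$ paths diverging near the root give $\Theta(km)$ states.

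The missing idea is the DFS invariant. If you always extend from the deepest state that has an unexplored child, then all such states lie on the root-to-leaf path of the most recently created leaf. Hence $T'$ is that path (at most $m+1$ states, since every transition consumes an edge) together with \emph{fully explored} subtrees of the compressed state tree hanging off it, with at most $k$ leaves in total. Inside those subtrees ``branching'' means the same in $T'$ and in the full tree, so \Cref{fac:limit-branching-states} and \Cref{lem:high-branching} do apply: map each forced state injectively to its unique child, which is a leaf or a branching state. This gives $\cO(k)$ states there and $\cO(m+k)$ overall.
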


We stress that the parameterization by $z \in \N$ in \Cref{thm:main}, in conjunction with the $\cO(m+\min(z,\totalETs))$ bound, can be useful in applications~\cite{DBLP:journals/jea/BernardiniCFLP21,TKDD2025}. In particular, these properties of our algorithm  allow one to assess whether $G=(V,E)$ has at least $z\leq \totalETs$ Eulerian trails in $\cO(m+z)$ time, which is useful for real-world graphs with very large $\totalETs$.

\section{Directed Multigraphs}\label{sec:multi}

Counting and enumerating Eulerian trails in directed graphs is an important computational primitive across several domains, such as bioinformatics~\cite{DBLP:journals/bmcbi/KingsfordSP10}, data privacy~\cite{DBLP:journals/jea/BernardiniCFLP21}, and graph compression~\cite{DBLP:conf/kdd/MaserratP10}. For more details, we refer the interested reader to the recent work of Conte et al.~\cite{TKDD2025}.
Yet, in these applications, the input graphs are typically directed \emph{multigraphs}.
Consider the application of investigating genome complexity~\cite{DBLP:journals/bmcbi/KingsfordSP10}. In DNA sequencing, the DNA of a sample is read by sequencing machines in the form of (short) fragments that must then be assembled back into one (or a few) DNA sequences. To this end, order-$k$ de Bruijn multigraphs are usually employed, where Eulerian trails and candidate assemblies are in a one-to-one correspondence~\cite{Pevzner2001}. In particular, the frequency of a multi-edge is the frequency of a DNA $k$-mer (length-$k$ substring), and thus it must be taken into account.
In directed multigraphs, the following two notions of when two Eulerian trails are \emph{distinct} have been proposed~\cite{DBLP:journals/jea/BernardiniCFLP21,TKDD2025,DBLP:journals/tcs/KuritaW22}: edge-distinct and node-distinct.
Two Eulerian trails are \emph{edge-distinct} if their edge sequences are different;
they are \emph{node-distinct} if their node sequences are different.
Note that, in simple graphs, these notions are equivalent, but in multigraphs, this is not the case. See \Cref{fig:edge-v-node-distinct} for an example.
Existing algorithms on directed~\cite{TKDD2025} and undirected~\cite{DBLP:journals/tcs/KuritaW22} multigraphs support both notions.

We show that we can extend our main algorithm (\Cref{thm:main}) to enumerate distinct Eulerian trails under either notion.
This is fairly straightforward for edge-distinctness, but requires some more care for node-distinctness. This is because, for the $\cO(m+z)$ running time, we only count outputted trails in $z$. Thus, if there are multiple edge-distinct Eulerian trails that have the same node sequence, we output it only once and budget only $\cO(1)$ time for it in our running time bound. 
We need to achieve this even though there could be an exponential number of edge-distinct Eulerian trails that all have the same node sequence (as in \Cref{fig:edge-v-node-distinct}).
The node-distinct case is more relevant to biological applications as there, node-distinct trails correspond to distinct DNA assemblies that actually yield distinct strings, while two edge-distinct trails may correspond to two assemblies yielding the same string. In these contexts, users are primarily interested in enumerating assemblies that yield unique strings.
In fact, this is also the case in the data privacy framework proposed in~\cite{DBLP:journals/jea/BernardiniCFLP21}.

\begin{figure}[htbp]
    \centering
    \includegraphics[page=17]{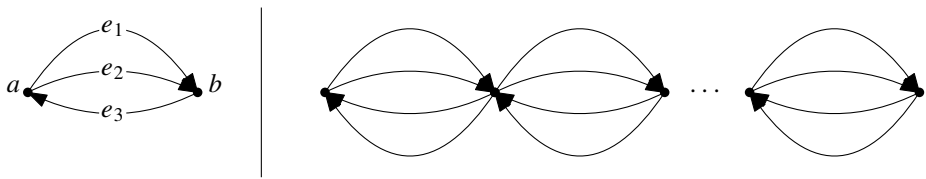}
    \caption{Two directed multigraphs. Left: the trails $e_1 e_3 e_2$ and $e_2 e_3 e_1$ are both Eulerian and edge-distinct, but they are not node-distinct as they both have the same node sequence $a b a b$. Right: a graph class with exponentially many edge-distinct Eulerian trails for the same node sequence.}
    \label{fig:edge-v-node-distinct}
\end{figure}

\subsection{Edge-Distinct Eulerian Trails}
This variant is straightforward. See also~\cite{TKDD2025}.
Compute the \emph{subdivided graph} of the input multigraph;
that is, the graph where every edge is subdivided into two edges by a new intermediate node.
The subdivided graph of a multigraph is a simple graph.
Apply \Cref{thm:main} to this graph.
It is easy to verify that the Eulerian trails in this graph are in one-to-one correspondence with the edge-distinct Eulerian trails in our input multigraph.
To ensure that we are able to output the correct encoding of these (i.e., the trie of edge-distinct Eulerian trails in our multigraph), observe that we can replace the newly added nodes from the subdivided graph and their adjacent edges with the corresponding edge from the input multigraph.

\begin{theorem}\label{the:edge-distinct}
    For any directed $m$-edge multigraph $G = (V, E)$ and any $z \in \N$, there is an algorithm that computes a compressed representation of $z$ edge-distinct Eulerian trails in $G$ (or all $\totalETs$ of them if $\totalETs\leq z$) in $\cO(m + \min(z, \totalETs))$ time.
\end{theorem}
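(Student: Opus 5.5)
The plan is to reduce to \Cref{thm:main} through the subdivision construction described just above, and then to check three things: that the trails correspond correctly, that the size parameters are preserved up to constants, and that the output can be translated back cheaply.

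\textbf{Step 1: the subdivided graph.} Let $G$ be the input multigraph, possibly with parallel edges and self-loops. Build $H$ by replacing every edge $e = u \to v$ of $G$ with a fresh node $x_e$ and the two edges $u \to x_e$ and $x_e \to v$.
\begin{itemize}
\item $H$ is simple. Each $x_e$ has exactly one in-edge and one out-edge, and distinct edges of $G$ receive distinct intermediate nodes, so no two edges of $H$ share both endpoints. A self-loop $u \to u$ becomes the 2-cycle $u \to x_e \to u$, which is allowed.
\item $H$ has $n+m$ nodes and $2m$ edges, and it can be built in $\cO(n+m)$ time. Since $G$ is weakly connected, $n = \cO(m)$ (trivial cases aside), so this is $\cO(m)$.
\item $H$ is weakly connected whenever $G$ is.
\item Every original node keeps its in-degree and out-degree, and every $x_e$ is balanced. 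So $H$ satisfies Euler's criterion exactly when $G$ does, with the same source $v_0$.
\end{itemize}

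\textbf{Step 2: the bijection.} Define $\phi$ on edge sequences of $G$ by sending each edge $e$ to the consecutive pair $(u\to x_e)(x_e\to v)$. Then $\phi$ maps an Eulerian trail of $G$ to an Eulerian trail of $H$ starting at $v_0$. Conversely, take any Eulerian trail of $H$ starting at $v_0$:
\begin{itemize}
\item $v_0$ is an original node.
\item Whenever the trail enters some $x_e$, the next edge is forced to be the unique out-edge of $x_e$. The trail cannot end at $x_e$, because $x_e$ is balanced and is not the start.
\end{itemize}
Hence the trail decomposes into consecutive pairs of the form $\phi(e)$, and $\phi$ is a bijection onto the Eulerian trails of $H$. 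Two trails of $G$ are edge-distinct exactly when their images differ, so $\totalETs(H)$ equals the number of edge-distinct Eulerian trails of $G$.

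\textbf{Step 3: running time.} Apply \Cref{thm:main} to $H$. It runs in $\cO(2m + \min(z,\totalETs))$ time, which is the claimed bound.

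\textbf{Step 4: translating the output.} This is the main point needing care. The compressed state tree produced for $H$ encodes prefixes of trails of $H$, and some of those prefixes end at a subdivision node $x_e$. I would argue as follows.
\begin{itemize}
\item Any state whose last node is some $x_e$ has out-degree one, so \Cref{rule:1} contracts $x_e$ with its out-neighbour before any branching decision can occur there. Consequently every branching state corresponds to a prefix ending at an original node, and that prefix is of the form $\phi(P)$ for a prefix $P$ of an edge-distinct trail of $G$.
\item The explicit tree transitions carry edge sequences of $H$. Relabel them by collapsing each pair $(u\to x_e)(x_e\to v)$ into the single edge $e$.
\item Each transition label is stored implicitly as a pointer into the Hierholzer trail together with a length. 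Collapsing pairs therefore only halves the lengths and maps each subdivision node back to its original edge identifier, which costs $\cO(1)$ per state.
\end{itemize}
By \Cref{thm:compressed-size}, the number of states is $\cO(m+z)$, so the whole translation also fits within $\cO(m+\min(z,\totalETs))$ time. The result is a compressed trie of edge-distinct Eulerian trails of $G$, as required.
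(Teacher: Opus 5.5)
Your proposal is correct and follows the paper's proof. Both subdivide every edge to obtain a simple graph whose Eulerian trails from $v_0$ are in bijection with the edge-distinct Eulerian trails of $G$, apply \Cref{thm:main}, and translate the output by collapsing each subdivided pair back into its original edge. Your version is simply more detailed, for instance in checking explicitly that the subdivided graph is simple and balanced and that no trail can stop at a subdivision node.
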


\subsection{Node-Distinct Eulerian Trails}
For this variant, we need to adapt the compression rules to deal with the challenge illustrated in \Cref{fig:edge-v-node-distinct}.
These adaptations are natural and directly reflect the differences between the edge-distinct and node-distinct definitions.
Our overall algorithm will remain the same except for applying the new compression rules.
Therefore, we will only reprove the points of the argument affected by the changed compression rules. 

We fix a directed $m$-edge multigraph $G = (V, E)$ and a $z \in \N$ as our input.
For the algorithm solving this variant to be efficient, we use the following natural representation of the multigraph.
Instead of storing multiple identical edges between nodes, we will instead use a single edge along with its multiplicity as a natural number.
Whenever a single edge is deleted from $G$, the multiplicity of the corresponding multi-edge is reduced by one, and a multi-edge is completely removed if its multiplicity reaches zero.
This representation can be computed in $\cO(m)$ time from $G$ given as an edge list.
Importantly, we only apply this compaction once at the beginning of the algorithm.
During the execution of the algorithm, new parallel multi-edges might be introduced, which we will keep separate.

The definitions of branchings and crossings remain the same and have the same properties as in the simple graph variant.
When using one of the multi-edges $u \rightarrow v$ of the graph, if $u$ still has other in-edges to be used, then there must be a way to reach $u$ from $v$ after using this edge. Therefore, we make the following observation. 

\begin{observation} \label{obs:multicrossing-one}
    A crossing must have multiplicity exactly one.
\end{observation}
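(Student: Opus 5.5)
The plan is a proof by contradiction in the remaining graph $\remainingGraph{s}$ of the current state $s$, which is Eulerian. The key point is that a multi-edge of multiplicity at least two has a copy that must be traversed after the first copy, so its tail is revisited and the edge lies on a closed walk.

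First, fix a multi-edge $u \to v$ in $\remainingGraph{s}$ with multiplicity $k \geq 2$. We show that it is not a crossing, that is, $\scc{u} = \scc{v}$.

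Take any Eulerian trail $T$ of $\remainingGraph{s}$; one exists because $s$ is a state. Let $e$ and $e'$ be the first and second copies of $u \to v$ used by $T$. The portion of $T$ strictly between $e$ and $e'$ begins at $v$, since it follows $e$. It ends at $u$, since it precedes $e'$. Hence it is a directed walk from $v$ to $u$ in $\remainingGraph{s}$. (If that portion is empty, then $u = v$ and the edge is a self-loop, which trivially lies inside one SCC.) Together with the edge $u \to v$ itself, this places $u$ and $v$ in the same SCC. So no copy of $u \to v$ is a crossing. Contrapositively, any crossing has multiplicity exactly one; its multiplicity cannot be zero, because it is an edge of the graph.

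The argument matches the informal reasoning stated before the observation. Within a trail, the tail $u$ must be re-entered after the first copy is used, because another copy still leaves $u$. It mirrors point (2) in the proof of \Cref{lem:crossings}: a trail that leaves $\scc{u}$ through a crossing can never return to $u$, so a second parallel copy would remain untraversed. This gives an alternative phrasing that avoids choosing $T$ explicitly. Suppose $u \to v$ were a crossing with two copies. Any Eulerian trail uses one copy first. After that it is in $\scc{v} \neq \scc{u}$ and can never return to $u$, so the second copy is never used, a contradiction.

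The only step needing care is checking that this reasoning is valid in the multigraph representation. There, parallel multi-edges created by contractions are kept separate, while the original ones are stored as a single edge with a multiplicity. In both cases the multiplicity counts genuinely parallel edges, each of which must be traversed exactly once. SCCs depend only on reachability, not on multiplicities. The argument therefore applies verbatim to the remaining graph at any state, including after applications of the compression rules.
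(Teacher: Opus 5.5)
Your proof is correct and takes essentially the same approach as the paper, which justifies the observation by the one-line remark that after one copy of $u \to v$ is used the trail must still return to $u$, giving a walk from $v$ back to $u$ and hence placing both endpoints in the same SCC. Your argument makes this precise by taking the segment of an Eulerian trail between the first two copies of $u \to v$.
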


With this in hand, we proceed to the core lemma about when the state tree is branching.
It is the analogue of \cref{lem:crossings}.

\begin{lemma} \label{lem:multicrossings}
    A state $s$ of $T^u$ is branching if and only if the last node of $P_s$ has at least two outgoing multi-edges \ie{to distinct nodes} that are not crossings in $G'_s$.
\end{lemma}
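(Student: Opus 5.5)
We mirror the proof of \Cref{lem:crossings}, re-establishing its three ingredients in the multigraph setting, where the state tree is now built over node-distinct prefixes. A child of $s$ is then determined by the \emph{next node} appended, not by the edge used. So ``at least two children'' means at least two distinct out-neighbors $w$ of the last node $v$ of $\prefix{s}$ such that appending some copy of $v\to w$ keeps $\prefix{s}$ extendable to an Eulerian trail. All copies of a multi-edge $v \to w$ lead to the same child, so we may reason per multi-edge rather than per individual edge.

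First, we show that a crossing out-multi-edge at $v$ cannot be chosen while a non-crossing out-multi-edge $v\to a$ remains in $\remainingGraph{s}$. This is the argument of (1) in \Cref{lem:crossings} unchanged: if the trail continued with $v \to w$ and later used $v\to a$, it would return to $v$, putting $w$ in $\scc{v}$. By \Cref{obs:multicrossing-one} a crossing has multiplicity one, so there is no ambiguity about copies here. Second, we show $v$ has at most one crossing out-multi-edge, as in (2): after leaving $\scc{v}$ the trail cannot return to use a second crossing. Together these give the ``only if'' direction. If $v$ has at most one non-crossing out-multi-edge, every valid extension appends that edge's head, or, when there are no non-crossing ones, the unique crossing's head. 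In either case only one child state exists.

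For the ``if'' direction, take two non-crossing out-multi-edges $v\to a$ and $v\to b$ with $a\neq b$. We show that removing one copy of either edge leaves an Eulerian remaining graph: the degree criterion is preserved exactly as before, and weak connectivity is preserved. For connectivity, the case needing care is when $v\to a$ has multiplicity one. Since $a\in\scc{v}$, there is a path from $a$ back to $v$ not using this copy of $v\to a$, so $a$ stays connected to $v$. If the multiplicity is larger, another copy remains. Isolated nodes are then discarded. Hence both $a$ and $b$ yield valid, distinct children of $s$, because they give distinct node sequences.

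The main obstacle is being careful that ``non-crossing'' still means the edge lies in a directed cycle once one copy is deleted. This holds because non-crossing means both endpoints share an SCC, and the return path from $a$ to $v$ can be chosen not to use the deleted copy: a simple path from $a$ to $v$ never needs to traverse an edge out of $v$ before reaching $v$. We must also make sure that distinct parallel copies are never counted as distinct children, which is exactly why the lemma counts out-multi-edges to distinct nodes.
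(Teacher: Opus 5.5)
Your proposal is correct and takes the paper's route: the paper's proof reruns the three ingredients of \Cref{lem:crossings}, with one adjustment. In ingredient (3), the degree criterion must count multi-edges with their multiplicities, and your ``remove one copy'' formulation handles exactly that. Your explicit weak-connectivity argument (a simple $a$-to-$v$ path never uses an out-edge of $v$) and your remark that children are indexed by the appended node, not by which copy is used, fill in details the paper leaves implicit rather than changing the approach.
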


\begin{proof}
The proof proceeds almost the same as for \cref{lem:crossings}.
Note that there is a subtlety about the degrees of the graph.
Concretely, in the multigraph setting, there may be two outgoing multi-edges, each of which represents multiple edges in the original graph.
In the argument for case $(3)$, the Euler criterion requires that we take the sum of multiplicities of incident edges correctly \eg{counting a multi-edge with multiplicity 2 twice}. 
If we do this correctly, the argument goes through the same way.
\end{proof}

Similarly, \cref{lem:incremental-scc} can be adapted as follows.

\begin{corollary}
    Given a multigraph $H$ and an Eulerian trail $R = e_1, \ldots, e_k$ in $H$, there is an algorithm running in $\cO(\setsize{R})$ time that determines which of the corresponding multi-edges to $e_i$ in the graph are crossings in $H \setminus \{e_1, \ldots, e_{i - 1}\}$.
\end{corollary}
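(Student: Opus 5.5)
The plan is to reduce the statement to the simple-graph case, \Cref{lem:incremental-scc}, using the same characterization: a multi-edge is a crossing at the moment it is used exactly when that use is the last out-edge use at its tail. One subtlety has to be handled first. If an edge $e_i = u \to v$ is not the last remaining edge of its multi-edge, then by \Cref{obs:multicrossing-one} the multi-edge cannot be a crossing. Otherwise some later copy $u\to v$ would still have to be traversed, and that is impossible after leaving $\scc{u}$ for a different SCC.

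The algorithm is a single backward scan over $R$. For each node $u$, it records whether an out-edge of $u$ has already been seen in the scan, meaning whether a later out-edge from $u$ exists. We then declare the multi-edge of $e_i = u\to v$ a crossing in $H\setminus\{e_1,\dots,e_{i-1}\}$ if and only if $e_i$ is the last out-edge of $u$ in $R$. There is one exception, as in the simple case: if $R$ is closed, the very last edge $e_k$ returns to the start node and is not a crossing. Being the last out-edge of $u$ also makes $e_i$ the last copy of its multi-edge, so the multiplicity condition comes for free. Each step costs $\cO(1)$, for a total of $\cO(\setsize{R})$.

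Correctness follows the two directions in the proof of \Cref{lem:incremental-scc}, with multiplicities counted.

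\emph{Not last.} Suppose a later out-edge of $u$ exists, either another copy of $u\to v$ or a different multi-edge. Then the trail $e_{i+1},\dots,e_k$ leads from $v$ back to $u$. So $v\in\scc{u}$ in the current remaining multigraph, and the multi-edge is not a crossing.

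\emph{Last.} Suppose $e_i$ is the last out-edge of $u$ and $R$ does not end by returning to $u$. The remaining graph is $H\setminus\{e_1,\dots,e_{i-1}\}$. In it, $u$ has exactly this one outgoing edge of multiplicity one, and every in-edge of $u$ has already been consumed except possibly ones used later. Here I would argue via the degree criterion that no later in-edge of $u$ can exist. A later in-edge would force a later departure from $u$ unless $u=t$ is the end of the trail. The case $u=t$ is exactly the cycle or ending exception, which is handled as above. Hence no path from $v$ back to $u$ exists, and $u\to v$ is a crossing.

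The only real care point is this last direction: multi-edge copies and the end node must be handled correctly. Counting in- and out-degrees with multiplicity, just as in \Cref{lem:multicrossings}, settles it.
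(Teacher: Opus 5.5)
Your approach is the paper's: the corollary reruns the last-out-edge characterization behind \Cref{lem:incremental-scc} in one backward scan, and \Cref{obs:multicrossing-one} handles multiplicities. There is a genuine gap in the exception you build into the algorithm.

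\textbf{The exception targets the wrong edge.} You declare $e_i=u\to v$ a crossing iff it is the last out-edge of $u$ in $R$, except that $e_k$ is non-crossing when $R$ is closed. Your own ``Last'' argument shows the exceptional edge is a different one: the last out-edge of the end node $t$ of $R$. If $u=t$ and $i<k$, the suffix $e_{i+1},\dots,e_k$ leads from $v$ back to $t$, so this edge is not a crossing. If $i=k$, then $u=t$ forces a self-loop. So the sentence ``the case $u=t$ is exactly the cycle or ending exception, which is handled as above'' is false for the algorithm as stated, and the output is wrong on concrete inputs.

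Take the closed trail $R=(s\to a,\ a\to s,\ s\to b,\ b\to s)$.
\begin{itemize}
\item Your scan marks $e_3=s\to b$ as a crossing, but $H\setminus\{e_1,e_2\}=\{s\to b,\ b\to s\}$ is strongly connected.
\item It marks $e_4=b\to s$ as non-crossing, but $b\to s$ is a crossing in $H\setminus\{e_1,e_2,e_3\}=\{b\to s\}$.
\end{itemize}
Open trails fail the same way. For $R=(s\to t,\ t\to a,\ a\to t)$ you mark $e_2=t\to a$ as a crossing, but $H\setminus\{e_1\}=\{t\to a,\ a\to t\}$ is strongly connected.

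\textbf{The fix.} Your two correctness paragraphs already justify the following rule. The multi-edge of $e_i=u\to v$ is a crossing in $H\setminus\{e_1,\dots,e_{i-1}\}$ if and only if $e_i$ is the last out-edge of $u$ in $R$ and $u\neq t$. Here $t$ is the final node of $R$, with $t=s$ when $R$ is closed. No special treatment of $e_k$ is needed, and the backward scan implements this in $\cO(\setsize{R})$ time with one extra comparison per edge.

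The exception must apply whether or not $R$ is closed. The proof of \Cref{lem:incremental-scc} states it only for cycles, which misses the open-trail example above. So copying that formulation verbatim would not repair your algorithm either.
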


We may now state the new compression rules adapted to multigraphs.

\begin{mcompression} \label{rule:m1}
    For any state $s$, if in the remaining graph $G_s'$, there is a node $v$ with a single outgoing multi-edge (of any multiplicity), contract that node and its only out-neighbor.
\end{mcompression}

\begin{mcompression} \label{rule:m2}
    Let $s$ be any state. If there is a node $v$ in $G_s'$ with only one self-loop, exactly one other outgoing multi-edge, and exactly one other incoming multi-edge, both of multiplicity one, remove the self-loop.
    If $v$ is the last node on $P_s$, we remove the self-loop only if $v$ has no other incoming multi-edge.
\end{mcompression}

The intuition for both rules is the same as in the simple graph version.
\cref{rule:m1} makes sure that trivially forced transitions are compressed, and \cref{rule:m2} removes loops if their starting point is never revisited.
Importantly, they yield the same guarantee that the state tree will be highly branching.

\begin{lemma}[Multigraph version of \Cref{lem:high-branching}] \label{lemma:multigraph-branching}
    Assume that in the state tree, \Cref{rule:m1,rule:m2} are applied exhaustively after each transition. Then there are never two forced state transitions in a row (i.e., there is no path of length two of forced transitions).
\end{lemma}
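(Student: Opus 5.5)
We follow the proof of \Cref{lem:high-branching} step by step, replacing ``out-edge'' by ``outgoing multi-edge'' and \Cref{lem:crossings} by \Cref{lem:multicrossings}. Suppose, for contradiction, that there are states $s_1, s_2, s_3$ such that both transitions $s_1 \to s_2$ and $s_2 \to s_3$ are forced. Let $v_i$ be the last node on $P_{s_i}$, and work in $G'_{s_1}$. This graph is Eulerian, because removing a prefix of an Eulerian trail preserves the degree criterion, with multiplicities counted.

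First, $v_1$ and $v_2$ each have at least two outgoing multi-edges, since otherwise \Cref{rule:m1} would have contracted them. By \Cref{lem:multicrossings}, a forced transition means the last node has at most one non-crossing outgoing multi-edge. A node can have at most one crossing out-edge, by the argument of \Cref{lem:crossings}(2). Hence $v_1$ and $v_2$ each have exactly two outgoing multi-edges: one non-crossing and one crossing. By \Cref{obs:multicrossing-one}, the crossing has multiplicity one. By the argument of \Cref{lem:crossings}(1), non-crossing edges must be used before the crossing, so $v_1 \to v_2$ and $v_2 \to v_3$ are the non-crossing multi-edges.

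Next we rule out that these non-crossing multi-edges are self-loops, which is where the modified \Cref{rule:m2} is needed.

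\textbf{Case $v_1 \to v_2$ is a self-loop.} Then $v_1$ has exactly one self-loop and exactly one other outgoing multi-edge, the crossing, of multiplicity one. We must check the in-side. By the Eulerian degree balance at $v_1$, taking into account that $v_1$ is the current node (possibly the start) and that a crossing leaves its SCC for good, $v_1$ has at most one other incoming multi-edge, of multiplicity one.
\begin{itemize}
\item If $v_1$ has an incoming multi-edge in $G'_{s_1}$, that multi-edge must come from $\scc{v_1}$. Otherwise the crossing could be taken only after a return to $v_1$, and the self-loop would make $v_1$ revisitable, which is impossible since nothing else re-enters the SCC.
\item If there is such an in-edge, it would have to be traversed later, which forces a second return to $v_1$ after its last out-edge. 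This contradicts balance.
\item So $v_1$, as the last node, has no other in-edge, and \Cref{rule:m2} applies. This contradicts exhaustive compression.
\end{itemize}

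\textbf{Case the self-loop is in $v_2$'s multi-edge.} The same reasoning applies at $v_2$ in $G'_{s_2}$.

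This case analysis is where I expect the main obstacle. It consists of the careful degree and multiplicity bookkeeping showing that the side conditions of \Cref{rule:m2} (exactly one other incoming multi-edge of multiplicity one, or none at the last node) are actually met. I would first establish a small claim: a node whose outgoing multi-edges are one self-loop plus one crossing of multiplicity one has in-degree, excluding loops, at most one with multiplicity one. This claim follows from balance and the fact that the crossing is its final exit.

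Once self-loops are excluded, $v_1 \neq v_2$, and they lie in the same SCC because they are joined by a non-crossing multi-edge. This SCC then has two distinct outgoing crossings, one from $v_1$ and one from $v_2$. That is impossible in an Eulerian multigraph, by the same reasoning as in \Cref{lem:high-branching} (cf.~\cite[Lemma 1]{TKDD2025}), which counts edges with multiplicity. This contradiction completes the plan.
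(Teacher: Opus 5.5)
\textbf{Where the proof fails.} Your outline follows the paper's own proof, and it breaks at the final step, in the same way the paper's proof does.

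You fix $G'_{s_1}$, but the fact that $s_2\to s_3$ is forced is a statement about $G'_{s_2}$. It is in $G'_{s_2}$ that $v_2$ has exactly one non-crossing and one crossing out-multi-edge. In $G'_{s_1}$ the edge $v_1\to v_2$ is still present. So if the head of $v_2$'s crossing can reach $v_1$, that edge is \emph{not} a crossing in $G'_{s_1}$. In $G'_{s_2}$, where it is a crossing, the copy of $v_1\to v_2$ may be used up, and $v_1$ may have been merged into the head of its own crossing by \Cref{rule:m1}. So no single remaining graph has one SCC carrying both crossings, and the Eulerian SCC structure gives no contradiction. Your self-loop case reaches the right conclusion, though the clean reason is simpler: if the current node's only non-crossing out-multi-edge is a self-loop, its SCC is a singleton, so no other in-edge can ever be traversed. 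That case is not where the trouble lies.

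\textbf{The statement is false.} The gap cannot be closed. Take the simple graph on $v_1,v_2,a,b,c,d,w$ with edges $v_1\to v_2$, $v_1\to w$, $v_2\to v_1$, $v_2\to a$, $a\to b$, $a\to c$, $b\to c$, $b\to d$, $c\to d$, $c\to a$, $d\to b$, $d\to v_2$.
\begin{itemize}
\item It has the Eulerian trail $v_1v_2abcacdbdv_2v_1w$.
\item Every node except $w$ has out-degree two, and there are no self-loops and no inner nodes of induced two-way paths. So the graph is exhaustively compressed.
\item At the root, $v_1\to w$ enters a sink, so $v_1\to v_2$ is forced.
\item After that transition, $v_1$ has out-degree one, and \Cref{rule:m1} merges it with $w$ into a sink $u$.
\item Now $v_2$ has out-edges $v_2\to u$, a crossing, and $v_2\to a$, non-crossing via $a\to b\to d\to v_2$. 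Nothing is compressible, so the second transition is forced as well.
\end{itemize}
In $G'_{s_1}$ both out-edges of $v_2$ are non-crossing; this is exactly the step that fails above. All multiplicities are one, so this refutes \Cref{lemma:multigraph-branching} and \Cref{lem:high-branching} alike.

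Longer runs are also possible. Replace $v_2\to a$ by $v_2\to v_3$, add $v_3\to p$ and $p\to v_2$, and attach a fresh copy of the gadget $\{a,b,c,d\}$ (entered at $a$, left from $d$) to each of $v_3$ and $p$. This gives three forced transitions in a row, and iterating the construction gives longer runs. So a size bound for the compressed tree cannot be obtained by bounding runs of consecutive forced transitions; it would need a global counting argument.
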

\begin{proof}
  The proof proceeds analogously to the simple graph version.
  The main reason is that \Cref{lem:multicrossings} tells us that, even in multigraphs, crossings will be edges of multiplicity one.
  Assume that $s_1, s_2, s_3$ are states in the compressed state tree such that the transitions $s_1 \to s_2$ and $s_2 \to s_3$ are forced. 
  Let $v_1, v_2, v_3$ be the last nodes on $P_{s_1},P_{s_2},P_{s_3}$, respectively.
  Our analysis proceeds in $G_{s_1}$.
  By \Cref{rule:m1}, $v_1$ and $v_2$ must have at least two outgoing multi-edges.
  As the transition is forced, for each node, one of them must have been a crossing (and thus have multiplicity one).
  The other edge can have arbitrary multiplicity.
  These two non-crossing edges cannot have been self-loops by \Cref{rule:m2}.
  In particular, $v_1$ and $v_2$ are distinct nodes, and the multi-edge $v_1 \to v_2$ is a non-crossing.
  Thus, $v_1$ and $v_2$ are in the same SCC.
  But then, this SCC has two outgoing crossings, a contradiction.
\end{proof}

After contraction, the graph may then have parallel multi-edges.
As we stated above, they should not be combined into a single edge, as the two multi-edges represent different node sequences in the original graph and Eulerian trail.
At its core, this is where we show why this version of the algorithm works specifically for node-distinct sequences (as opposed to the edge-distinct version from the previous section).
This is justified by the following lemma, which implies that each node in the state tree represents node-distinct prefixes of Eulerian trails.

\begin{lemma} \label{lemma:multigraph-disjoint-branching}
    Given a state $s$ of $T^u$, let $e_1$ and $e_2$ be two multi-edges in $G_s$. 
    Then, after decompression, they represent different node sequences in $G$.
\end{lemma}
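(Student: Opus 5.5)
We read the statement as follows. Let $e_1 \neq e_2$ be two distinct multi-edges of the compressed remaining graph $G'_s$, where distinct means they are kept as separate entries in the multi-edge representation. Then the node sequences in $G$ obtained by expanding them are different. We argue by induction on the number of compression operations performed so far. These are the initial compaction, \Cref{rule:m1} contractions, \Cref{rule:m2} self-loop removals, edge deletions along the prefix, and induced two-way path replacements. The invariant is: \emph{every multi-edge $e$ of the current graph carries a well-defined expansion $\sigma(e)$, a node sequence in $G$, and any two distinct multi-edges sharing the same endpoints have distinct expansions.} Multi-edges with different endpoints trivially have different expansions, since the first and last nodes of the sequence already differ. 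So the content of the invariant concerns parallel multi-edges, including self-loops on one node.

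The base case is the initial compaction. There, parallel copies of an original edge $u\to v$ are merged into one multi-edge with a multiplicity. Hence immediately afterwards any two distinct multi-edges have distinct endpoint pairs, and the invariant holds. Edge deletions only lower multiplicities or remove multi-edges, and \Cref{rule:m2} only removes a self-loop, so neither can break the invariant.

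The main step is a \Cref{rule:m1} contraction of $v$ with its unique out-neighbor $w$ via the multi-edge $f=v\to w$. Every multi-edge entering $v$ is then re-routed: an edge $x\to v$ with expansion $\sigma$ gets the expansion $\sigma\cdot\sigma(f)$, with the new endpoint $vw$. Every multi-edge leaving $w$ keeps its expansion, but it now starts at $vw$. So two multi-edges can become parallel in one of two ways.
\begin{itemize}
\item Both were in-edges of $v$ from the same $x$, or both were in-edges of $w$ from the same $x$. In that case they were already parallel, so their expansions differ by induction. Appending the same suffix $\sigma(f)$ keeps them distinct.
\item One was an in-edge of $v$ and the other an in-edge of $w$ from the same $x$. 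Their expansions are $\sigma\cdot\sigma(f)$ and $\sigma'$. Here $\sigma'$ ends at $w$ immediately after $x$'s portion, while $\sigma\cdot\sigma(f)$ passes through $v$ before reaching $w$. So the sequences differ at the position right after the prefix up to $v$. This needs the expansion to record the internal node $v$, which it does, since contraction keeps the internal nodes of the expansion.
\end{itemize}
Self-loops created on $vw$ from $w\to v$ edges are handled by the same comparison. Such a loop has expansion $\sigma(w\to v)\cdot\sigma(f)$ and starts at $w$, whereas loops that already existed at $v$ or $w$ start at $v$ or at $w$ respectively. Because multiplicities were merged only once at the start, two new parallel edges are never merged into one. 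Each of them therefore corresponds to a distinct expansion.

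The main obstacle I expect is making expansions well defined and comparable, so that two different multi-edges cannot collapse to the same sequence by passing through a contracted node in different ways. My plan is to define $\sigma$ explicitly as a concatenation along the contraction history. The comparison is then done at the first position where the two contraction histories differ, with a case split on whether that position is the original edge $x\to v$ or the edge $x\to w$. The two-way path meta-nodes are treated as contractions whose expansions alternate between the endpoints, and they are covered by the same argument. Finally, the statement as phrased ("represent different node sequences") follows directly from the invariant applied to the state $s$.
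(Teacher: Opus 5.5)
Your proof is correct and follows the same route as the paper. Distinct endpoints already force distinct sequences, parallel multi-edges can only arise from \Cref{rule:m1} contractions (deletions and \Cref{rule:m2} cannot create them), and distinctness is carried through by induction over these contractions; your case analysis makes explicit what the paper leaves as ``easily argued inductively.'' The one step to state more carefully is your second case: $\sigma\cdot\sigma(f)\neq\sigma'$ holds because $v$ occurs on $\sigma\cdot\sigma(f)$ but not on $\sigma'$ ($v$ is not an endpoint of $\sigma'$, and since $v$ is still a node of the current graph it cannot be an interior node, as only nodes already contracted away occur in the interior of an expansion), and not because the sequences first differ ``right after the prefix up to $v$,'' which is false in general since $\sigma$ and $\sigma'$ may already differ earlier.
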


\begin{proof}
  It is clear this is true if the start or end points of $e_1$ and $e_2$ differ, 
  as this will be the start and end node of the sequence these edges decompress into.
  So assume $e_1$ and $e_2$ are parallel multi-edges in $G_s$.
  Thus, they must have been created by an application of \Cref{rule:m1} (the other rule cannot create parallel multi-edges).
  Then, the statement of the lemma can easily be argued inductively over the applications of the rule.
\end{proof}

For the running time, \cref{rule:m1} is applied at most once in every state, as after taking a transition, the out-degree changes for only one node and \cref{rule:m2} is applicable only at the endpoints of the taken multi-edge, as the out-degree for those changes (similarly to \cref{lem:fast-compression}).
Thus, the running time stays the same.

Finally, as we have verified that the core lemmas from the simple version translate, we obtain the following result.

\begin{theorem}
    The state tree compressed using \Cref{rule:m1,rule:m2} has size $\cO(\totalETs)$.
    In particular, any state with $z$ leaf-descendants has $\cO(m+z)$ ancestors and descendants.
\end{theorem}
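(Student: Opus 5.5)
The plan is to mirror the argument behind \Cref{thm:compressed-size}, with \Cref{lemma:multigraph-branching} replacing \Cref{lem:high-branching}. The compressed state tree is obtained from $T^u$ by suppressing exactly the states that \Cref{rule:m1,rule:m2} make implicit. Its leaves are the node-distinct Eulerian trails from $v_0$, and by \Cref{lemma:multigraph-disjoint-branching} distinct leaves give distinct node sequences. So the number of leaves is $\totalETs$, where $\totalETs$ now counts node-distinct trails. \Cref{fac:limit-branching-states} then says there are at most $\totalETs$ branching states.

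Next I bound the non-branching inner states. Each such state has a single outgoing transition, and that transition is forced. By \Cref{lemma:multigraph-branching} no two forced transitions occur consecutively. So on any root-to-leaf path, every non-branching inner state is immediately followed by a branching state or a leaf. Charging each non-branching state to its unique child is therefore injective into the set of branching states and leaves. Hence there are at most $2\totalETs$ non-branching states, and the tree has $\cO(\totalETs)$ states in total. The root may be the one exception if it is forced; it adds $\cO(1)$.

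For the second statement, fix a state $s$ with $z$ leaf-descendants. The subtree rooted at $s$ is itself a tree with $z$ leaves, so the same counting gives $\cO(z)$ descendants. For the ancestors, every transition consumes at least one edge of $G$, so the depth of the compressed tree is at most $m$ and $s$ has $\cO(m)$ ancestors. Together this gives $\cO(m+z)$.

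The main obstacle is checking that \Cref{fac:limit-branching-states} and the charging argument apply to the \emph{node-distinct} tree. In particular, a state must not be counted as branching merely because it has two parallel multi-edges leading to children with identical node sequences. This is exactly what \Cref{lem:multicrossings} and \Cref{lemma:multigraph-disjoint-branching} provide: children of a state correspond to non-crossing outgoing multi-edges, and these decompress to distinct node sequences. So the children are genuinely distinct, and every leaf is a distinct node-distinct trail. With that in place, the rest is the simple-graph counting unchanged.
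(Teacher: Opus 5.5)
Your proposal is correct and takes the paper's route: the paper obtains this theorem by combining \Cref{lemma:multigraph-branching} with \Cref{fac:limit-branching-states}, just as for \Cref{thm:compressed-size}. The injective charging of each non-branching state to its unique child (which must be branching or a leaf) and the depth bound of $m$ on ancestors are the details the paper leaves implicit; the special case for the root is unnecessary, since the charging already covers it.
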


We have to be a bit more careful about analyzing the running time of this algorithm.
In particular, it could now be that \Cref{rule:m1} is applied to a node with only one out-neighbor but multiple different in-neighbors.
We can directly obtain an algorithm with running time $\cO((m + z) \log n)$, so near-optimal, by using a union-find data structure with undo support for the contraction operation \eg{\cite{DBLP:journals/siamcomp/WestbrookT89}}.

A more careful analysis shows that in fact even the naive implementation of the contraction operation (manually changing the pointers in the edge list) is sufficient to obtain the $\cO(m+z)$ running time.
We assume that during contraction, we re-point the pointers on the smaller side.
The core lemma for this improved analysis is the following.

\begin{lemma}
\label{lem:contraction-amortization}
Consider all states $s$ such that the transition to $s$ causes \Cref{rule:m1} to be applied at some node $v$.
If this caused $\ell$ pointers to be changed, then we can find $\Omega(\ell)$ different descendant states of $s$ such that these are injective over all $s$ \ie{the same state is not chosen for different $s$}.
\end{lemma}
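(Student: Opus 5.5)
Here is the plan. When \Cref{rule:m1} contracts $v$ with its unique out-neighbour $w$, re-pointing the smaller side costs $\ell = \cO(\min(\deg(v),\deg(w)))$ pointer changes, where $\deg$ counts distinct incident multi-edges. I would charge these $\ell$ changes to distinct future branching states inside the subtree rooted at $s$. Then \Cref{lemma:multigraph-branching} and \Cref{fac:limit-branching-states} turn the charged states into leaves, giving injectivity up to constant factors.

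\textbf{Step 1: the small side has many distinct in-neighbours.} In the multigraph representation, $v$ has a single outgoing multi-edge of multiplicity $k \ge 1$. By the Eulerian degree balance, $v$ therefore has in-degree $k$ (up to $\pm 1$ at the endpoints), split over its incoming multi-edges. A pointer change is only needed for a distinct multi-edge incident to the smaller side. So $\ell$ is bounded by the number of distinct multi-edges entering or leaving the smaller of the two nodes.

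\textbf{Step 2: each such multi-edge yields a choice later.} Take the merged node $vw$. Each of its distinct incoming multi-edges $u_i \to vw$ (with $u_i$ distinct, or parallel but node-distinct by \Cref{lemma:multigraph-disjoint-branching}) must be traversed at some point in every completion of $\prefix{s}$. Likewise, each distinct outgoing multi-edge of $vw$ inherited from the small side is a possible continuation. By \Cref{lem:multicrossings}, at most one out-multi-edge of $vw$ is a crossing. So whenever the trail sits at $vw$ and at least two outgoing multi-edges remain, the state is branching. Every visit to $vw$ consumes one outgoing edge. Hence, along the leftmost root-to-leaf path below $s$ (the one Hierholzer constructs), the visits to $vw$ yield $\Omega(\text{out-degree of } vw)$ branching states, each with current node $vw$. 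The out-degree of $vw$ is at least the distinct-degree of the small side, up to a constant. I charge the pointer changes to these branching states, or to one leaf in each of their non-leftmost subtrees.

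\textbf{Step 3: injectivity.} I expect this to be the main obstacle. A later contraction at a descendant state $s'$ might charge the same branching states as $s$. To prevent this, I would make the charging specific to the node: state $t$ is charged only by the most recent contraction that created its current node, namely the contraction that produced the supernode at which $t$'s branching choice is made. Since the current node at $t$ is a single supernode, this is unique. If $vw$ is later merged again into a bigger node, the new contraction charges its new small side only. That side's edges are distinct multi-edges not counted before, and the charge goes to states after that later transition.

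If two ancestors $s$ and $s'$ still compete for the same descendant, I would instead charge each edge of the small side to the first branching state at which that specific multi-edge is traversed, keyed by the pair (multi-edge, state). A state has $\cO(1)$ such keys, because the transition into it uses one multi-edge. Combined with \Cref{lemma:multigraph-branching}, which shows forced states are sparse, this gives the required $\Omega(\ell)$ injectively chosen descendants. Summing over all states then bounds the total re-pointing cost by $\cO(m + z)$.
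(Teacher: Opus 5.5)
\paragraph{Gap 1: Step~2 charges to choices at $vw$ that need not exist.}
Your Step~2 charges the re-pointing cost to branching visits at the merged node $vw$. But the out-neighbourhood of $vw$ is just that of $w$, because the unique out-multi-edge $v\to w$ is absorbed by the contraction. So when $v$ is the smaller side, it contributes no outgoing choices, and $\ell$ is essentially the number of distinct in-neighbours $u_1,\dots,u_k$ of $v$. Nothing forces $w$ to have many distinct out-neighbours.

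For instance, let $v$ have $k$ distinct in-neighbours, each via a multi-edge of multiplicity one, and let $v\to w$ have multiplicity $k$. Let $w$ have $k$ further distinct in-neighbours and exactly two out-multi-edges: $w\to x$ of multiplicity $2k-1$, and a crossing $w\to y$. Then:
\begin{itemize}
\item $v$ is the smaller side and $\ell=\Theta(k)$;
\item the trail returns to $vw$ $\Theta(k)$ times;
\item every such visit is forced, since the crossing must be used last.
\end{itemize}
Your inequality ``out-degree of $vw$ $\ge$ distinct-degree of the small side'' holds only when counting multiplicities. That gives many visits, not many choices.

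Your branching criterion is also off by one. Of two remaining out-multi-edges one may be the crossing, whereas \Cref{lem:multicrossings} needs two non-crossing ones.

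This is exactly the paper's Case~(2), where $w$ has at most two out-neighbours. The paper does not charge to out-choices at $vw$. It charges each pointer change to a descendant in which the persistent multi-edge that was originally $u_i\to v$ is traversed. It then uses that the $k$ cycles through $v$ can be closed in any order, so each $u_i$ has on the order of $k!$ candidate descendants rather than one.

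\paragraph{Gap 2: Step~3 does not give injectivity.}
That factorial slack is also what your Step~3 lacks.

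Under your rule ``charged only by the most recent contraction that created the current node'', $s$ loses its candidates if $vw$ is contracted again before being revisited often enough. You give no replacement, and the paper identifies exactly this as the main difficulty.

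Your fallback charges each small-side multi-edge to the first state at which it is traversed. This breaks injectivity in another way. The same multi-edge $u_i\to v$ is re-pointed by every later contraction on the same root-to-leaf path in which its supernode is the smaller side, and all these ancestors charge the same key. Bounding the number of keys per state says nothing about the number of charges per key.

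The paper closes this with a counting step. Because the smaller side is re-pointed, $\ell'$ further in-side pointer changes bring at least $\ell'$ new in-neighbours. The candidate pool therefore grows to $(k+\ell')!$, which dominates the accumulated demand $\ell+\ell'$. The at most two out-side pointer changes are discarded, which is why the statement only claims $\Omega(\ell)$.
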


Intuitively, one can think of the lemma giving us a way to charge the cost of the contraction operation at $v$ to states below $s$.
Since we charge at most once to each state, we get that the cost can be bounded by the number of states in our tree.
Note that this assignment is only an analysis tool, it is not calculated by the algorithm.

\begin{proof}
Assume $v$ has $k$ in-neighbors $u_1, \dots, u_k$ and notice that $k \ge \ell$.
For each in-neighbor $u_i$ of $v$ pick a descendant of $s$ that corresponds to using the multi-edge that originally was $u_i \to v$ before contractions happened.
Note that many different contractions could have happened, but since we do not merge multi-edges, the identity of this multi-edge persists.
Pick such a descendant as far up in the state tree as possible and only pick one that has not been picked before.
We argue below why there are always enough candidates.

Let $w$ be the node $v$ was contracted into.
First, observe that if for some $u_i$ there is a cycle returning to $w$ through $u_i$ that does not cause a contraction at $w$, then the desired assignment is trivially possible. 
The main challenge is that $w$ may be itself contracted away, meaning we might have to find a second assignment for the multi-edge from each $u_i$. 
We consider two cases: (1) $w$ has at least three out-neighbors, and (2) $w$ has at most two out-neighbors.

Case (1) is straightforward. Here, for each $u_i$ there must be a cycle through $w$ and $u_i$ (that is using the multi-edge we want to base our charging on) that only uses one of the out-multi-edges of $w$.
Therefore, there is a state that uses it before $w$ itself can be contracted away (since we never perform any of our compressions at a node with at least three out-neighbors).
We can simply use one of these states for our assignment.
Any potential future contraction of $w$ will be assigned to states below those, so that we achieve injectivity.

Case (2) is slightly more intricate.
If $w$ has out-degree 
at most two, then we could apply compression rules to it before any cycle including the edges from the neighbors $u_i$ returns to $w$, including possibly contracting $w$, leading us to need to find more states to charge the images of the previous multi-edges to.
Fortunately, using combinatorial insights, we can show that there will be enough possible states we can pick to still achieve injectivity.

We may assume that the prefix $\prefix{s}$ of the Eulerian trail ends in the same SCC as $v$ (because only there the degrees can change).
Since the graph is Eulerian, there are at least $k$ cycles including $v$, one for each in-neighbor.
As they can be traversed in any order, for each $u_i$, there are at least $k!$ descendant nodes corresponding to returning to the node that $v$ has been contracted into from $u_i$.
In particular, this means that for any $u_i$ we have $k!$ choices for our assignment (which is significantly more than the $\ell$ we need).
Now, since during contractions, we perform pointer changes on the smaller side, we may assume that if we now have performed $\ell'$ additional pointer changes on the in-side, the number of new in-neighbors of the node we contracted in is at least $\ell'$.
Thus, the new number of possible states we can assign our contractions to is $(k + \ell')!$.
Since $k \ge \ell$, we see $(k + \ell')!$ dominates $\ell + \ell'$, thus enabling us to still perform an injective assignment for the pointer changes on the in-side.
On the out-side we perform at most two pointer changes by the assumption of case (2). These, we will simply ignore. (This is why the lemma statement claims $\Omega(\ell)$ instead of $\ell$.)
This argument can be inductively repeated if more contractions occur at the node that $v$ belongs to before we reach the assigned states.
\end{proof}

With this observation in hand, we are able to argue that the increased contraction costs do not blow up the running time.
Recall that we explore the state tree in a DFS fashion: we have at most $m$ nodes along a single path in the state tree which are not fully explored.
For all states that cause a contraction where we have explored all descendants, we use \Cref{lem:contraction-amortization} to budget the contraction cost.
Since the assignment is injective and we construct $\cO(z)$ states, the total cost of the contractions budgeted for this is also $\cO(z)$.

For contractions in states where we have not explored all descendants, observe that since they lie on a single path, each node is contracted at most once.
By the Eulerian condition, if we have to perform $\ell$ pointer updates, then we also contract a multi-edge of multiplicity at least $\ell$.
Thus in total, the cost is bounded by the sum of all multiplicities, which is $m$.
Together, this gives us the desired $\cO(m + z)$ running time.

\begin{theorem}\label{the:node-distinct}
    For any directed $m$-edge multigraph $G = (V, E)$ and any $z \in \N$, there is an algorithm that computes a compressed representation of $z$ node-distinct Eulerian trails in $G$ (or all $\totalETs$ of them if $\totalETs\leq z$) in $\cO(m + \min(z, \totalETs))$ time.
\end{theorem}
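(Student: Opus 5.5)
The proof assembles the multigraph ingredients exactly as \Cref{thm:main} assembles the simple-graph ones. The algorithm is \Cref{alg:summary}, run on the compacted multigraph representation (parallel edges merged once, with multiplicities), using \Cref{rule:m1,rule:m2} in place of \Cref{rule:1,rule:2}.

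\textbf{Correctness.} We argue that the leaves of the compressed state tree are in bijection with node-distinct Eulerian trails from $v_0$.
\begin{itemize}
\item Soundness of \Cref{rule:m1,rule:m2} is proved as in the simple case. A node with a single outgoing multi-edge must follow it. Under the degree conditions of \Cref{rule:m2}, the self-loop's position is forced immediately before the crossing out-edge, which has multiplicity one by \Cref{obs:multicrossing-one}.
\item By \Cref{lem:multicrossings}, the children of a state correspond exactly to distinct non-crossing outgoing multi-edges. Using two copies of the same multi-edge yields the same node sequence, so the multiplicity collapse is precisely what node-distinctness requires.
\item By \Cref{lemma:multigraph-disjoint-branching}, distinct (parallel) multi-edges created by contraction decompress to different node sequences.
\item By induction along the tree, distinct root-to-leaf paths therefore give distinct node sequences. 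Every node-distinct trail is reached, since choosing the multi-edge that matches its next node segment is always a valid extension.
\end{itemize}

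\textbf{Size.} By \Cref{lemma:multigraph-branching}, no two forced transitions occur in a row. Combined with \Cref{fac:limit-branching-states}, this bounds the explored subtree with $\min(z,\totalETs)$ leaves by $\cO(m+\min(z,\totalETs))$ states. The $m$ term accounts for the single root-to-current path of not-fully-explored states in the DFS order described in \Cref{sec:Hier}.

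\textbf{Running time.} This is the step I expect to be the main obstacle.
\begin{itemize}
\item \emph{Hierholzer calls.} The multigraph analogues of \Cref{lem:fast-compression,lem:fast-hierholzer} hold because a transition changes degrees only at the endpoints of the used multi-edge. Hence \Cref{rule:m1} fires at most once and \Cref{rule:m2} at most twice, with no cascading (induced two-way paths are handled as in \Cref{sec:cascade-fix}). Each Hierholzer call therefore costs time linear in the number of new states, and branching detection on the new path uses the multigraph corollary of \Cref{lem:incremental-scc}.
\item \emph{Contractions.} Unlike the simple case, a contraction in \Cref{rule:m1} is not $\cO(1)$: it costs $\ell$ pointer updates, performed on the smaller side. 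We split these contractions into two classes.
\begin{itemize}
\item Contractions at states whose subtree is fully explored. By \Cref{lem:contraction-amortization}, each can be charged to $\Omega(\ell)$ distinct descendant states, injectively over all such states. Their total cost is thus $\cO(\text{number of states}) = \cO(m+z)$.
\item Contractions at states still on the current DFS path. Each node of $G$ is contracted at most once along one path. By the Eulerian degree balance, $\ell$ pointer updates imply the contracted multi-edge has multiplicity at least $\ell$, so these costs sum to at most $\sum$ multiplicities $= m$.
\end{itemize}
\item \emph{Initial work and undo.} Initial compaction and compression take $\cO(m)$ time, as in \Cref{rem:initial-compression}. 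Undo operations on backtracking cost the same as the forward operations, since they are replayed from a stack.
\end{itemize}

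Summing these contributions gives $\cO(m+\min(z,\totalETs))$. The main risk is making the two-class charging airtight, in particular that undoing and redoing contractions across sibling subtrees does not double-charge states; this is where I would lean most on the injectivity guaranteed by \Cref{lem:contraction-amortization}.
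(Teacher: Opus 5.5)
Your proposal is correct and follows the paper's argument essentially step for step. The common steps are the substitution of \Cref{rule:m1,rule:m2}, the size bound via \Cref{lemma:multigraph-branching} and \Cref{fac:limit-branching-states}, node-distinctness via \Cref{lemma:multigraph-disjoint-branching}, and the same two-class accounting of contraction costs: injective charging via \Cref{lem:contraction-amortization} for fully explored states, and the multiplicity-sum bound of $m$ for states on the current DFS path. The double-charging worry you flag is already handled by the lemma as stated. Each contraction event is indexed by the state whose incoming transition triggers it, so repeating a contraction in a sibling subtree is a different state $s$, covered by the same injectivity.
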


\section{Final Remarks}\label{sec:final}
We have developed a remarkably simple and optimal algorithm for enumerating Eulerian trails in directed graphs (and multigraphs), essentially completing the complexity landscape of this fundamental topic. As a consequence, our result improves on an implementation of the BEST theorem for counting Eulerian trails when $\totalETs=o(n^2)$, and, in addition, it unconditionally improves the combinatorial $\cO(m\cdot \totalETs)$-time counting algorithm of Conte et al.~\cite{TKDD2025} for the same task. 

Beyond its theoretical or practical impact, our paper serves as a useful educational resource for graph algorithms classes, demonstrating how combinatorial insight yields \emph{simple} and \emph{optimal} solutions.

\section*{Acknowledgments}
We would like to thank Jeroen op de Beek
for reading an earlier version of this manuscript
and providing useful feedback.

\FloatBarrier
\bibliographystyle{alphaurl}
\bibliography{references}

\end{document}